\newtcolorbox{tbox}[1][]{enhanced jigsaw,breakable,pad at break=1mm, oversize,interior hidden,colframe=black,boxrule=0.5pt,nobeforeafter=,#1}
\def\A{\ensuremath{\mathcal{A}}}
\def\Sim{\ensuremath{\mathcal{S}}}
\def\D{\ensuremath{\mathcal{D}}}
\def\H{\ensuremath{\mathcal{H}}}
\def\F{\ensuremath{\mathcal{F}}}
\def\X{\ensuremath{\mathcal{X}}}
\def\Y{\ensuremath{\mathcal{Y}}}
\def\ideal{\textsc{ideal}}
\def\ppt{\textsc{ppt}}
\newcommand{\bin}{\ensuremath{\{0,1\}}}
\newcommand{\sample}{\hskip2.3pt{\gets\!\!\mbox{\tiny${\$}$\normalsize}}\,}
\newcommand{\secparam}{\ensuremath{1^\kappa}}
\newcommand{\share}{\ensuremath{\mathsf{Share}}}
\newcommand{\recon}{\ensuremath{\mathsf{Rec}}}
\newcommand{\sh}{\ensuremath{\mathsf{sh}}}
\crefname{section}{\S\!}{\S\S}
\crefname{appendix}{\S\!}{\S\S}
\crefname{figure}{Figure}{Figures}
\crefname{table}{Table}{Tables}
\newtheorem{definition}{Definition}
\newcommand{\sys}{Co\-Vault}
\newcommand{\tm}{\textsuperscript{\tiny\textregistered}}
\newcommand{\texp}{\mbox{$t_e$}} 
\renewcommand\paragraph[1]{\smallskip\noindent\textbf{#1}}
\appto\UrlBreaks{\do\-}
\itshape\color{blue},    
\newif\iftr
\newtheorem{thm}[]{Theorem}
\newtheorem{lem}[]{Lemma}
\begin{document}

\date{}

\title{\sys: Secure, Scalable Analytics of Personal Data}

\author[1]{Roberta De Viti}
\author[1\thanks{\enspace Work done at MPI-SWS (post-doc); currently affiliated with Heliax.}]{Isaac Sheff}
\author[2,4]{Noemi Glaeser}
\author[3]{Baltasar Dinis}
\author[3]{Rodrigo Rodrigues}
\author[4]{\\Bobby Bhattacharjee}
\author[5]{Anwar Hithnawi}
\author[1]{Deepak Garg}
\author[1]{Peter Druschel}
\affil[1]{Max Planck Institute for Software Systems (MPI-SWS), Saarland Informatics
	Campus}
\affil[2]{Max Planck Institute for Security and Privacy (MPI-SP)}
\affil[3]{Instituto Superior Tecnico (ULisboa), INESC-ID}
\affil[4]{University of Maryland}
\affil[5]{ETH Zürich}

\maketitle

\begin{abstract}
	Analytics on personal data, such as individuals' mobility, financial,
and health data can be of significant benefit to society. Such data
is already collected by smartphones, apps and services today, but
liberal societies have so far refrained from making it available for
large-scale analytics.  Arguably, this is due at least in part to the
lack of an analytics platform that can secure data through
transparent, technical means (ideally with decentralized trust),
enforce source policies, handle millions of distinct data sources, and
run queries on billions of records with acceptable query latencies.
To bridge this gap, we present an analytics platform called \sys,
which combines secure multi-party computation (MPC) with trusted
execution environment (TEE)-based delegation of trust to be able
execute approved queries on encrypted data contributed by individuals
within a datacenter to achieve the above properties.  We show
that \sys\ scales well despite the high cost of MPC. For
example, \sys\ can process data relevant to epidemic analytics for a
country of 80M people (about 11.85B data records/day) on a continuous
basis using a core pair for every 20,000 people. Compared to a
state-of-the-art MPC-based platform, {\sys} can process queries between 7 to
over 100 times faster, as well as scale to many sources and big data.

\end{abstract}

\section{Introduction}
\label{sec:intro}

Personal data about individuals' health, nutrition, activity,
mobility, social contacts, and finances are being captured at high
resolution by individuals’ smart devices and apps, by online services,
and by offline services that involve digital record-keeping such as
hospitals and banks. Large-scale {\em analytics} of this data could
benefit (i) public health, e.g., by studying the spread of epidemics
with high spatio-temporal resolution, or new and rare diseases; (ii)
sustainability, e.g., by informing transportation, energy, and urban
planning; (iii) social welfare, e.g., by uncovering disparities in
income and access to education or services; and (iv) economic
stability, e.g., by providing insight into markets, among many others.

Yet, liberal societies have mostly refrained from making personal data
available for large-scale analysis, even if doing so would clearly be
in the public interest (e.g., for scientific research). Arguably, this
is largely out of concern that data leaks and misuse would harm
individuals and businesses, erode the public's trust and deter
voluntary data contributions. There is a growing realization, however,
that this conservative approach conflicts with urgent societal
challenges like public health and sustainability, where savings
  on the order of billions of Euros are thought possible using
  data-driven innovation in the EU alone~\cite{european-data-strategy,
    digital-strategy}.  To this end, the EU Parliament recently
approved the Data Governance Act (DGA)~\cite{dga}, which seeks to
facilitate voluntary contributions of high-resolution data from
diverse \emph{data sources}---individuals, private companies and
public bodies---for analysis by authorized parties (e.g., scientists,
government agencies) in the EU.

\paragraph{Requirements.} A fundamental requirement for analytics of
sensitive data is {\em security}. Secure analytics has specific
security requirements like {\em selective consent} by data owners as
to who gets to use their data for what purpose and for how long, {\em
  confidentiality} of contributed data, and {\em integrity} of query
results.\footnote{A related requirement is the {\em privacy} of
individuals in query results. However, privacy is a property of
queries that is enforced by restricting allowed queries to those that
satisfy a strong criterion like differential privacy. {\sys} supports
restrictions on allowed queries through its query class mechanism.}
The DGA relies on {\em data intermediaries} who act as trusted data
clearinghouses that ensure security.  We believe that the security
requirements should be enforced by \emph{strong technical means} that
do not place trust in individual intermediaries, in order to justify
the public's trust and encourage contributions of sensitive data.  An
additional requirement for secure analytics is \emph{scalability},
ideally to data sources numbering in the hundreds of millions, and
data records numbering in the hundreds of billions, %
without restricting the class of computable queries.

\subsection{State-of-the-art in secure analytics}
Considerable progress has been made in recent years towards platforms
that enable analytics with strong security and distributed trust. Many
of them rely on {\em secure multi-party computation
  (MPC)}~\cite{mpc-evans} as a building block. We focus on {\em
    actively} (or {\em maliciously}) secure MPC, which protects
  against compromised parties that can arbitrarily deviate from the
  protocol.

\paragraph{Cooperative analytics} platforms like Senate~\cite{Senate21}
perform MPC among a set of large data aggregators to support analytics
queries and can tolerate up to $m-1$ of $m$ malicious data
aggegators. The data aggregators act as the parties in the MPC, and
can therefore naturally enforce selective consent, confidentiality,
and integrity of the data they contribute.  Cooperative analytics
matches perfectly scenarios where individuals already share their data
with large organizations such as health providers, banks, or online
services they trust and can play the role of the MPC parties.  A
remaining challenge, however, is supporting contributions of sensitive
data by {\em individuals} without requiring them to trust an
aggregator with their sensitive data. It remains a challenge because
MPC among data sources does not scale to more than a few dozen
sources.

\paragraph{Federated analytics} platforms like Mycelium and
Arboretum~\cite{sosp21-mycelium,sosp23-arboretum} rely on MPC,
homomorphic encryption (HE), and zero-knowledge proofs (ZKP) to
support secure queries over data stored on billions of smart devices.
This is a good match for data contributions by individuals who
participate in queries with their own devices.  Here, a remaining
challenge is that the query results depend on the set of devices that
happen to be available at query execution time, making the approach
unsuitable for analytics where data sampling is insufficient, like the
analysis of rare phenomena.

\smallskip\noindent An additional remaining challenge is scale-out to
big data and complex queries.  In both cooperative and federated
analytics, scale-out is ultimately limited by the latency and
bandwidth of wide-area network (WAN) communication, which results from
the need to separate MPC parties physically in order to justify their
independence.  MPC protocols, particularly those that tolerate WAN
delays, require high bandwidth, and we will show that scaling out
queries quickly exceeds the available bandwidth in a WAN.

\paragraph{Other existing platforms}
either operate in the weaker {\em passively secure} (or {\em
    semi-honest}) threat model (which protects against partially
  compromised parties that follow the protocol correctly but try to
  infer secrets from data they
  obtain)~\cite{Nayak2015,Chowdhury20,190910,stealthdb,DBLP:conf/ccs/OhrimenkoCFGKS15,papadimitriou2016,Bater2017,Volgushev2019,secureml},
do not distribute
trust~\cite{Maheshwari2000,Vingralek2002,Bajaj2011,Arasu2013,Baumann2015,Schuster2015,ryoan-osdi16,fuhry2017,Zheng2017,DBLP:conf/sp/PriebeVC18,oasis-labs1,oasis-labs2,Arasu2013oblivious,oblidb,10.1145/3342195.3387552},
support only time-series
data~\cite{9833611,burkhalter2020timecrypt,burkhalter2021zeph}, or are
limited to small data~\cite{Gupta2019}.

\smallskip\noindent In this paper, we present {\em \sys}, a secure
analytics platform that (1) supports complete (not sampled) query
results (like cooperative analytics); (2) can operate on data from
large numbers of individuals (like federated analytics); and (3)
scales to big data sets by leveraging the resources and scalability of
a single data center, while offering distributed trust and malicious
security.  Table~\ref{tab:sota} compares {\sys} with the closest
state-of-the-art systems.

\begin{table}[t]
  \caption{{\sys} versus state-of-the-art.
    }
  \label{tab:sota}
  \centering
  \scalebox{0.8}{
      {\small %
   	\setlength\tabcolsep{1.5pt}
   	\begin{tabular}{| c | c | c | c |}
          \hline
   	  System & Federated analytics & Cooperative analytics & \sys \\
                 & (\cite{sosp21-mycelium,sosp23-arboretum}) & (\cite{Senate21}) & \\     
   	  \hline
   	  Technology & HE+MPC+ZKP & MPC & MPC+TEEs \\
   	  \hline
          Malicious parties & {\color{orange} $<3\%$ user devices} & {\color{teal} $m-1$ of $m$ aggregators} & {\color{teal} $1$ out of $2$ TEEs} \\
          \hline
          Data sources & {\color{teal} Many individuals}  & {\color{orange} Few large aggregators} & {\color{teal} Many individuals} \\
          (offline)    &  {\color{orange} No} &  {\color{orange} No} & {\color{teal} Yes} \\
          \hline
          Scaling ultimately &  {\color{orange} Compute/network} &  {\color{orange} Wide-area network}  &  {\color{teal} Resources of a} \\
          limited by        &  {\color{orange} of user devices} &  {\color{orange} bandwidth/delay}  &  {\color{teal} datacenter} \\          
          \hline
   	\end{tabular}
   	\setlength\tabcolsep{6pt} %
      }}
\end{table}

\subsection{Key insights}
Any MPC system assumes the independence and non-collusion of its
parties for data security, up to the MPC protocol’s fault
threshold. In cooperative analytics systems, this independence is not
a concern because the data sources assume the role of the MPC
parties. But because the overhead of MPC protocols is super-linear in
the number of parties, the approach can support only a modest number
of data sources.  Federated analytics platforms select a small random
subset of user devices to serve as MPC parties. This approach
decouples the number of MPC parties from the number of data sources;
as a consequence, however, the system can tolerate only a small
proportion of malicious devices, low enough to ensure that any small
random subset has a diminishing probability of including more than the
MPC protocol's threshold of malicious parties.  Moreover, in both
types of systems the geo-distribution of parties places latency and
bandwidth limitations on query (MPC) execution and ultimately limits
the size of databases and the complexity of queries that can be
executed in practice.  {\bf Our key insight is that one can overcome
  these challenges by performing MPC among a small set of parties
  colocated in a single datacenter, where each party is isolated in a
  trusted execution environment (TEE) from a different, independent
  vendor}. This has two consequences:

\paragraph{1) MPC on multiple independent TEEs obviates the
  need for physical separation of parties.}  MPC assumes that parties
are independent in the sense that their failures/compromise are not
correlated. Usually, one uses administrative and geographic separation
of the parties as the basis for this independence. We observe that the
required independence can also be achieved by running the parties
inside attested TEEs independent vendors. Thus, an attacker would have
to simultaneously compromise a threshold number of different TEEs to
compromise the MPC. As a result, MPC security no longer relies on
physical and administrative separation of parties, enabling them to
run in a single datacenter. The high bandwidth and low latency within
the datacenter along with its computational resources in turn enables
MPC to perform more complex queries on larger data than previously
possible.\footnote{We note that colocation centers can provide
  physical separation without geo-distribution. However, without a
  system like {\sys}, hosting MPC in a colocation center still
  requires administrative independence of the tenants that host the
  parties.}

\paragraph{2) TEEs enable data sources to safely entrust their data to a
   separate set of MPC parties, thereby decoupling the two.}  In
\cref{subsec:delegation}, we discuss how data sources can safely
entrust their data to a separate set of TEE-isolated parties.  As a
result, the number of MPC parties can be chosen to meet the needs of
the MPC protocol independent of the number of data sources. Moreover,
the data sources no longer need to take part in query processing.

\subsection{Delegating trust}
\label{subsec:delegation}

Compared to a platform where the data sources act as the MPC parties,
entrusting data processing to a set of TEE-isolated parties poses
several technical challenges:

\paragraph{Authenticating parties.} Since the data sources do not
themselves act as parties, they need to ascertain that the (separate)
parties are legitimate, i.e., they each execute in an authentic TEE of
a different type and run the expected code and initial
configuration. This can be accomplished using the remote attestation
capability of TEEs~\cite{sgx_explained,remote-attestation-raza}.

\paragraph{Input integrity.} When data sources act as MPC parties,
each source provides its contributed data directly as input to the
multi-party computation, which is sufficient to ensure the data’s
integrity and confidentiality. With trust delegation, on the other
hand, data must be shared in such a way that selective consent,
confidentiality and integrity of a data source’s contributed data does
not depend on any one TEE. We use an authenticated secret-sharing
scheme, where each data source secret-shares its contributed data, and
sends a different share to each of the parties.

\paragraph{Selective \emph{forward} consent (SFC).} When each data source acts
as an MPC party, it can prevent the misuse of its data by refusing to
participate in queries that it does not approve of.  With trust
delegation, on the other hand, each data source must be able to ensure
\emph{upfront} that the MPC parties will perform only approved queries
on behalf of approved analysts, and before their data expires. Towards
this end, {\sys}'s parties include code to enforce the conditions data
sources have consented to. The sources encrypt their data in such a
way that only legitimate TEEs that execute the expected code can
decrypt their data.

\paragraph{Who attests TEEs and verifies their code?}
In an ideal world, every data source would verify and attest the
  code and initial configuration of TEEs before contributing its
  data. However, individuals mostly lack the technical expertise and
  resources to verify that the code run by the MPC parties implements
  only the queries they approve and ensures the integrity and
  confidentiality of their contributed data. We envision a transparent
  \emph{community process} where a group of \emph{community-appointed
  experts} verify and attest the TEEs' code and initial configurations
  on behalf of data sources, and publish the outcomes. Individual data
  sources who lack the ability to verify and attest TEEs rely on the
  attestations of subsets of experts they trust. In the rest of this
  paper, we tacitly assume the existence of such a community process.

\smallskip\noindent
In summary, the combination of MPC, authenticated secret sharing, and
a community review process allows individual data sources to delegate
trust to a small set of TEE-encapsulated parties within a single
datacenter. In~\cref{sec:covault_design}, we combine this secure core
functionality with an oblivious data retrieval scheme and oblivious
MapReduce to yield a scalable analytics platform.

\subsection{\sys\ prototype}
While {\sys}'s design is not tied to any particular malicious-secure
MPC protocol or number of parties, our prototype uses two parties, of
which one may be compromised.  Even with just two parties,
  having to compromise two types of TEEs presents a formidable
  challenge to an adversary.  For MPC, it uses Boolean garbled
circuits evaluated with the DualEx protocol~\cite{dualex}, which, to
our knowledge, is the fastest known actively secure, boolean 2PC
protocol. We modify the protocol slightly to output the query result
to the analyst rather than to the parties, and prove our modification
secure (\cref{subsec:security-analysis}).  Configurations with more
than two parties are currently not of interest to \sys\ due to the
limited availability of different TEEs on the market and the lack of
maliciously secure MPC protocols as efficient as DualEx for more than
two parties.

\subsection{Contributions}

Our contributions include
(a) an MPC configuration where each party executes in a different,
independent TEE, which obviates the need for administrative and
physical separation of parties and enables MPC within a single
datacenter with high bandwidth links;
(b) a set of techniques that allow the delegation of trust to a set of
TEE-encapsulated parties while ensuring authentication, attestation,
input integrity, confidentiality, and selective forward consent;
(c) the design of {\sys}, a secure analytics platform where data
sources upload encrypted data to a small set of diverse TEEs, which
compute analytics queries using MPC. The design enables secure
analytics on encrypted data contributed by individuals at a scale that
is limited only by the resources of a datacenter;
(d) two technical components of possibly independent interest: A
provably secure authenticated secret-sharing scheme
(\cref{subsec:macs}) and an actively secure 2PC protocol that reveals
information to a third entity, not the computing parties
(\cref{subsec:dualex});
(e) an experimental evaluation with epidemic analytics as an example
scenario, which shows that {\sys} can handle data for a country of 80M
using a core pair per 20,000 people. Moreover, we show that
  {\sys} is 7 to over 100 times faster on than existing MPC-based
  platforms and not limited by WAN bandwidth.

\section{\sys\ Overview}
\label{sec:background}

We survey relevant building blocks, sketch {\sys}'s architecture, and
provide a roadmap for the rest of the paper.

\begin{figure}[t]
\centering
\includegraphics[scale=0.5]{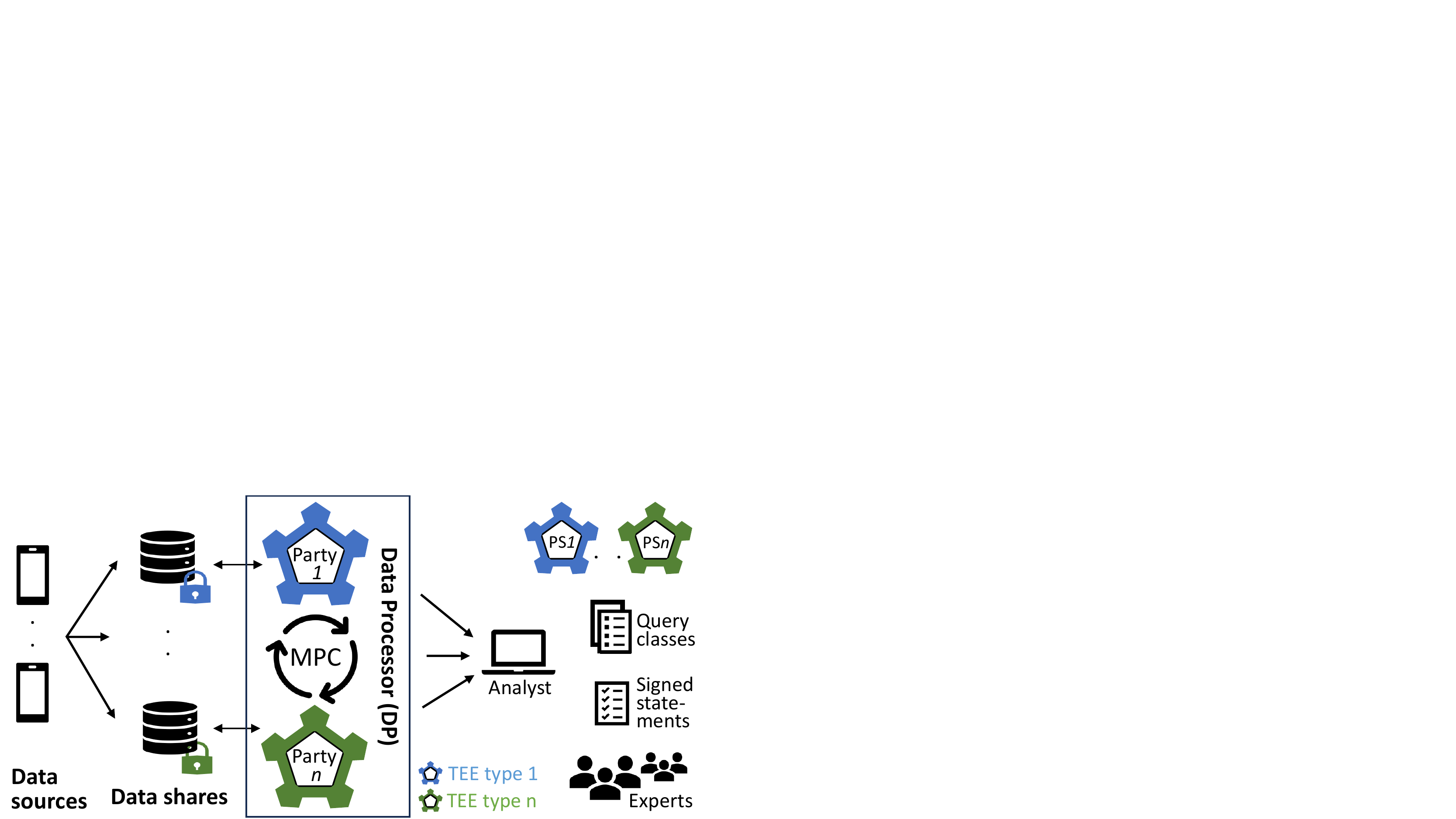}
\caption{{\sys} core functionality. \emph{Conceptually}, data sources
  encrypt their data in such a way that only queries within a given
  query class can be executed on their data, by analysts authorized by
  the query class, and for a period defined by the query
  class. \emph{Concretely}, (a) Every data source secret-shares its
  data into $n$ authenticated shares, encrypts each share for a
  different party, checks (see step (c)) that the parties have
    been verified, and uploads the shares to an untrusted DB in a
  datacenter.  (b) Queries are processed by a data processor (DP)
  hosted in the same datacenter. The DP consists of $n$ parties, each
  encapsulated by a different TEE type, that access their
  corresponding shares of the data, jointly perform MPC and deliver
  their respective shares of the query result to an authorized
  analyst. (c) Provisioning servers (PSs), one per party and
  implemented in a TEE of the party's type, attest and provision all
  DPs of a party with their keys. Community experts attest the PSs and
  DPs, verify their code and configuration, and publish signed
  statements.}
\label{fig:arch}
\end{figure}

\subsection{Building blocks}
\label{subsec:tech}

\paragraph{Secret sharing} is a method for sharing a secret value
among multiple parties, such that the value can be obtained only if a
sufficient fraction of shares (possibly all) are combined. {\em
  Authenticated} secret sharing additionally adds auxiliary
information that allows parties to verify that a value they have
reconstituted from their shares is authentic.

\paragraph{Secure multi-party computation (MPC)} allows
several parties to jointly compute a function without revealing their
respective inputs.  MPC protocols are either passively
secure (corrupt parties are semi-honest \cite{Yao1982,
  huang2011faster}) or actively secure (corrupt parties may be
malicious \cite{malicious2pc}).
In $t$-of-$n$ active security, up to $t$ of the $n$ parties may be
malicious. The value of $t$ varies from $n/3$ to $n-1$ depending on
the protocol.  Garbled circuits (GCs) are a specific passively
  secure 2PC technique ($n=2$, $t=1$)~\cite{2pc_is_practical}. GCs
are inherently data oblivious as circuits have no control flow.

\paragraph{Trusted execution environments (TEEs)} are supported by many recent
general-purpose CPUs, e.g., Intel SGX, AMD SEV, and the upcoming Intel
TDX and ARM CCA~\cite{sgx_explained,amdsev-snp,intel-tdx,arm-cca}.
TEEs provide confidentiality and integrity of data and computation
under a threat model that tolerates compromised operating systems,
hypervisors, and even some physical
attacks~\cite{tz_sok,sgx_explained}.  Unlike the first-generation SGX,
which was designed for client-side applications, the newer SEV-SNP,
TDX, and CCA encapsulate an entire VM, providing easy migration of
software.  An attestation protocol verifies that a VM executes in an
authentic TEE of a given type and that its initial memory state (code
and data) matches an expected secure hash value (measurement). Based
on the attestation, the VM is provided with cryptographic material
needed to authenticate itself to remote parties and to access data
sealed for it.  The security of a TEE depends on the integrity of its
vendor's certificate chain, proprietary hardware, and firmware.

\subsection{\sys\ roadmap and threat model}
\label{subsec:threat-model}

We introduce {\sys}'s core functionality through a series of strawman
designs in \cref{sec:fe-construction} and present the detailed design
in \cref{sec:full_construction_design}. \cref{fig:arch} summarizes how
{\sys}'s core functionality is provided.
In \cref{sec:covault_design}, we show how to scale out \sys\ using
oblivious MapReduce~\cite{dean2004mapreduce}, where each mapper and reducer is an
instance of the DP shown in \cref{fig:arch}.
In \cref{sec:evaluation}, we evaluate \sys\ in the context of epidemic
analytics as an example application scenario, and compare its
performance to the state-of-the-art.
\iftr
The appendix offers more details on \sys's community approval process
(\cref{sec:community-approval}), oblivious map-reduce technique
(\cref{app:details-map-reduce}), ingress processing
(\cref{sec:ingress}), and epidemic analytics scenario
(\cref{sec:epidemics}), with additional evaluation material.
Furthermore, the appendix includes our algorithm to estimate
end-to-end query latency (\cref{sec:mr-extrapolation}) and proofs of
security of the primitives we use in \sys\ (\cref{app:proofs}).
\else
A technical report (TR) \cite{covault-tr} provides proofs of
  security of the primitives we use in \sys.
\fi

\paragraph{Threat model.}
\sys\ relies on MPC with $t$-of-$n$ active security in a static
corruption model. The code of the $n$ parties runs inside the TEE
technologies of $n$ different vendors. Consequently, we assume that
$n-t$ of these $n$ TEE technologies / vendors are uncompromised; the
remaining $t$ TEE technologies / vendors may be compromised
(statically) using any means including hardware backdoors,
exploits and side channels.

Additionally, \sys\ relies on community experts to check and attest
the code and initial states of the parties' TEEs. These experts are
trusted to a limited extent. Specifically, a rogue (or mistaken)
community expert who attests incorrect code may compromise the data
confidentiality and consent of data sources who rely on that expert's
attestations. Confidentiality and consent of sources who did not rely on
rogue experts stays intact. Dually, if an analyst depends on the
attestations of a rogue expert, then the integrity of query results
provided to that analyst may not hold. The integrity of query results
returned to other analysts is unaffected.

We assume that data sources follow the data contribution protocol
correctly and do not deliberately contribute false or biased data
(data poisoning attacks). A malicious source can compromise the
confidentiality and SFC of its own data (but not of others' data), and
bias query results relying on its data to the extent that the query is
sensitive to that source's data.

We make standard assumptions about the security of cryptographic
primitives used in \sys's design. We assume that data is stored in
\sys\ for periods short enough that the encryption remains secure
despite advances in cryptanalysis and compute power.
Denial-of-service attacks are out of scope; they
can be addressed with orthogonal techniques.

\section{\sys\ core design}
\label{sec:fe-construction}

In this section, we present the design of {\sys}'s core functionality,
beginning with the API and then incrementally refining a strawman
design. The subsequent \Cref{sec:full_construction_design} presents
the details of a 2PC implementation of the \sys\ design.

\label{subsec:api}

\begin{figure*}
  \small
  \center
  \begin{tabular}{|p{\textwidth}|}
    \hline
        {$MPK[Q, D, \texp] \leftarrow {\bf Setup}(1^\kappa, Q, D, \texp)$} \\
        \begin{tabular}{@{}p{\textwidth}@{}}
          MPC to initialize a new query class and produce a
          public-private key pair $(MPK[Q, D, \texp], MSK[Q, D,
            \texp])$ bound to the class $[Q, D, \texp]$. $MPK[Q, D,
            \texp]$ published for data sources, $MSK[Q, D, \texp]$
          retained by the $n$ parties.  \textbf{Inputs:} $\kappa$:
          security parameter, $Q$: set of allowed queries, $D$: set of authorized decryptors,
          $\texp$: expiration time.
            \textbf{Outputs:} $n$-element public key $(MPK[Q, D, \texp])$. 
          \end{tabular} \\
    \hline\hline
        {$C \leftarrow {\bf Contribute}(MPK[Q, D, \texp], m)$} \\
        \begin{tabular}{@{}p{0.95\textwidth}@{}}
        Executed locally at a data source to contribute data $m$ to the query class $[Q, D, t_e]$.
        \textbf{Inputs:} $MSK[Q,D, \texp]$: public key, $m$:
        data to encrypt. \textbf{Outputs:} $C$: $n$ encrypted shares of $m$.
        \end{tabular} \\
        \hline\hline
    $q(m_1,\ldots,m_k) \leftarrow {\bf Query}(q, d, C_1,\ldots,C_k)$ \\
          \begin{tabular}{@{}p{0.95\textwidth}@{}}
            MPC to compute $q(m_1,\ldots,m_k)$. If the caller is $d$, $C_i$s are encrypted shares of $m_i$s in the same query class $[Q,D,\texp]$,  $d \in D$, $q \in Q$, and current
            time $<$ $\texp$, then return $q(m_1,\dots,m_k)$, else return $\bot$.
            \textbf{Inputs:}  $q$: query, $d$: decryptor, $C_i$: $n$ encrypted shares of data from source $i$.
            \textbf{Outputs:} $q(m_1,\ldots,m_k)$ or $\bot$.
          \end{tabular} \\
   \hline
  \end{tabular}
  \caption{\sys's API ($MPK$, $MSK$, $C$ are $n$-element
      vectors, where $n$ is the number of parties.)}
\label{fig:fe-api}
\end{figure*}

\sys's API is shown in \cref{fig:fe-api}.  Analysts who wish to
solicit data contributions define a {\em query class} using the Setup
operation. A query class is defined by the triple $[Q, D, \texp]$,
where $Q$ denotes a set of queries, $D$ a set of authorized analysts,
and $\texp$ a time at which all contributed data should expire and no
longer be available for analysis.  A data source contributes data to a
query class only if it is comfortable with the instance's $Q$, $D$ and
$\texp$.  Data sources contribute data to a query class using the
$Contribute$ operation, and authorized analysts may execute queries
using the $Query$ operation.

We wish to realize this API satisfying three
properties:\\
\textbf{SFC/confidentiality.}  Contributed data remains
confidential. SFC: Only authorized analysts (from the set $D$) may
execute authorized queries (from the set $Q$) on the contributed data,
that too only before the expiration time ($\texp$). No other entity
learns anything.\\
\textbf{Integrity.}  Any modifications to the data shares or the query
result will be detected.\\
\textbf{Colocation.} All data processing can be done in a single
  datacenter without weakening our threat model.

Next, we describe a strawman design, S1, for the core functionality
that attains SFC/confidentiality and integrity, but not
  colocation. In \cref{subsec:s3}, we modify this strawman to attain
our full construction, S2.

\subsection{Strawman (S1): \textit{n}-party construction}
\label{subsec:s1}

Our first strawman, S1, implements the API (Figure~\ref{fig:fe-api})
by combining secret sharing and $n$-party MPC (but not TEEs,
  which we add in \cref{subsec:s3}). Specifically, $n$ independent
parties \emph{jointly} hold shares of the secret key, $MSK[Q, D,
  \texp]$.  S1 attains SFC/confidentiality and integrity but
  \emph{not colocation} since the parties need to be physically
  separated to ensure their independence (colocation is obtained later
  using TEEs).

\paragraph{Components.}
 To contribute its data, a source uses a $n$-of-$n$ {\em
    authenticated secret-sharing scheme} (the cleartext cannot be
  recovered unless all $n$ shares are available); then, it entrusts
  each data share to a different one of $n$ parties by encrypting the
  share with the respective party's public key.  To execute a query,
  the parties decrypt their shares of data locally, and run $n$-party
  MPC on their shares to reconstruct the uploaded plaintexts
  $m_1,\ldots,m_k$ and compute the query result, which is provided to
  the analyst. The chosen MPC scheme must be actively secure with
$t$-of-$n$ static compromise: no party learns anything about
$m_1,\ldots,m_k$
as long as at least $n-t$ parties are honest.  Furthermore, the chosen
MPC scheme is data oblivious, i.e., without any control flow.  A
simple way to attain data obliviousness is to use a circuit-based MPC
scheme.

\paragraph{Implementation.} Next, we sketch how S1 implements the
$n$-party API shown in Figure~\ref{fig:fe-api}, which can be used
after the $n$ parties are initialized.

\smallskip
\noindent
{${\bf Setup}$: } Each party executes this function locally, produces
a standard asymmetric public-private key pair, keeps its private key
locally alongside $Q$, $D$ and $\texp$, and publishes the public
key. MPK and MSK denote $n$-element vectors (one element for each
party/share) of the public and private keys, respectively.

\smallskip
\noindent
${\bf Contribute}$: A data source runs this function locally, secret-shares
its data $m$ into $n$ shares with an authenticated secret-sharing
scheme, and encrypts each share with the corresponding party's public
key in the vector $MPK[Q,D, \texp]$.  $C$ is the vector of the
encrypted shares.

\smallskip
\noindent
${\bf Query}$: The analyst $d$ calls this function separately on each
party, authenticates itself as $d$, and provides the desired function
$q$, and the party's encrypted shares of the vectors
$C_1,\ldots,C_k$. If each party can authenticate $d$, finds that
  $d\in D$, $q \in Q$ and its local time is less than $\texp$, then
  the parties decrypt their respective encrypted shares locally, and
  perform MPC, which first checks the authenticity of the shares
  (using the authenticated secret-sharing scheme), and then computes
  the query result encrypted for the analyst and signed by a private
  key that exists only in shared form. If any check fails, the result
  is $\bot$.

\paragraph{Properties.}
S1 has the desired SFC/confidentiality and integrity
  assuming that at least $n-t$ parties are honest. Realizing
  this assumption requires the parties to have independent roots of
  trust (for compromise), i.e, their compromisability should not be
  correlated. Colocating the parties in a single datacenter breaks
  this independence, so S1 does not satisfy the colocation property.

\subsection{Full construction (S2): S1 + TEEs}
\label{subsec:s3}

Our full construction, S2, modifies S1 to additionally provide the
  colocation property.

\paragraph{Components.}
S2 executes each of the $n$ parties inside a TEE of a different,
independent design and implementation. The code of each party is
verified to be correct and the corresponding initial measurement of
each party's TEE is attested by community experts as part of a review
process.

\paragraph{Implementation.}
During initialization, the TEEs for each of the $n$ parties are
  started and community experts attest the TEEs.  All subsequent
  communication among and with the TEEs occurs via secure channels
  tied to the attestation. The API functions from S1 are executed in
  TEEs. In the API call \textbf{Contribute}, the user encrypts its
  data shares only if the TEEs have been attested by community experts
  the user trusts.

\paragraph{Properties.}
S2 improves S1 by providing colocation in addition to the
  SFC/confidentiality and integrity of S1. Unlike in S1, where
  colocation of the $n$ parties correlates the
  parties'\ compromisability, in S2, the $n$ attested TEEs
  provide independent roots of trust for the parties irrespective of
  their physical locations and, in particular, when the parties are
  colocated in the same datacenter.

\section{\sys: A 2PC implementation}
\label{sec:full_construction_design}
\label{sec:covault_design}

Next, we describe the details of a specific implementation
of the {\sys} core API (\cref{fig:fe-api}) for $n = 2$ parties (of which
$t = 1$ may be malicious), using garbled circuits and the
DualEx protocol~\cite{dualex} for 2PC. These choices coincide with
those in our prototype implementation. We emphasize that the
constructions described here generalize to any number of parties and
any actively-secure MPC protocol in obvious ways.

\paragraph{General setting.}
The two parties execute in different, independent TEE
implementations. Each party consists of two kinds of components: a
single provisioning service (PS) and one or more data processors
(DPs).  The PSs perform actions on behalf of their party, and attest
and provision the DPs in their pipeline with the key pair necessary to
decrypt data shares.

The DPs of the two parties are the entities that perform 2PC: each
party's DPs \emph{together} constitute one \textit{party} in the sense
of MPC.  A corresponding pair of DPs (one from each party) is specific
to a query class $[Q, D, \texp]$.

Each party's PS holds information on the query classes that have been
defined. For each triple, this information consists of the measurement
hash of the binary code implementing $Q$ in MPC, the public keys of
the authorized analysts in $D$, the data expiration time $t_e$, and
the public keys of both DPs that implement that query class
$[Q,D,t_e]$.

PSs and DPs can be safely shut down and re-started from their sealed
state~\cite{intel-sgx-sealing} without re-attestation. We discuss how
to prevent roll-back of the database in \cref{subsec:database}.

\paragraph{System initialization.}
Each party instantiates its PS; then, each PS generates its party’s
private-public key pair. Community experts attest the PSs and signs
their public keys.

\subsection{API call ${\bf Setup}(1^\kappa, Q, D, \texp)$}
\label{subsec:attestation}
\label{subsec:setup}

A new query class can be initiated by interested analysts using the
{\bf Setup} call. Each party spawns a fresh DP for the query class,
and configures it with the class parameters $[Q,D,\texp]$.

The DP is remotely attested by its PS and provisioned with
cryptographic keys, including the DP's secret key to decrypt its
shares of data (in the sense of standard asymmetric
cryptography). Each PS stores the query class $[Q,D,\texp]$ and the
public keys of both DPs, and advertises them publicly. The secret and
public keys of the DPs are, respectively, the vectors $MSK[Q,D,\texp]$
and $MPK[Q,D,\texp]$ of the API.

In a final step, community experts attest the DPs of the query class
and publish the results.

\subsection{API call ${\bf Contribute}(MPK[Q, D, \texp], m)$}
\label{subsec:secret-sharing}
\label{subsec:macs}

Data sources contact the PSs to retrieve information on the available
query classes $[Q,D,\texp]$, as well as the public keys of the DP
pairs that implement them. To contribute data to a query class $[Q, D,
  \texp]$, a data source first checks that the two DPs implementing
the class have been attested by community experts the source trusts.

Next, the data source uses the authenticated secret-sharing scheme
described below to create two \emph{shares} of its data, and encrypts
each share with the public key of one of the two DPs. (This pair of
encrypted shares is denoted $C$ in the {\bf Contribute} API.)  It
provides each share to its respective DP. Secret sharing ensures data
confidentiality, while subsequently encrypting the shares prevents
data misuse: Only correctly attested TEEs (i.e., attested to implement
the specific $[Q, D, t_e]$ and thus provisioned with the corresponding
decryption keys) can decrypt the shares.

\paragraph{Authenticated secret sharing.}
Data sources share data using an \emph{authenticated secret-sharing
scheme}, which allows DP pairs to verify that the shares have not been
modified by either party before using the sharing (input integrity).
Our authenticated secret-sharing scheme is based on a MAC followed by
secret-sharing, so we call it \emph{MAC-then-share} or
MtS.\footnote{{\sys}'s design is not tied to this particular
authenticated secret-sharing scheme. We could also have used other
schemes~\cite{damgaard2012multiparty,eskandarian2021clarion}.}

The scheme assumes a MAC function $M$ that provides key- and
message-non-malleability and message privacy. An example of such a
function $\mathsf{M}$ is KMAC256~\cite{kmac-nist}, which our prototype
uses. To split data $m$ into two shares, the data holder generates a
random key $k$, computes a tag $t \leftarrow \mathsf{M}_k(m)$, then
generates two random strings $r_k, r_m$, and uses (standard)
xor-secret-sharing to generate shares $k_1, k_2$ of the key $k$ and
shares $m_1, m_2$ of the data $m$:
\[ k_1 \leftarrow k \oplus r_k \mbox{,~~~~} k_2
\leftarrow r_k \mbox{ ~~~~~~~~~~~~ } m_1 \leftarrow m \oplus r_m
\mbox{,~~~~} m_2 \leftarrow r_m\] The two MtS authenticated shares of
the data are $(m_1, k_1, t)$ and $(m_2, k_2, t)$. Each share looks
looks random, but anyone possessing both the shares can verify them
\emph{jointly} by checking that $\mathsf{M}_{k_1 \oplus k_2} (m_1
\oplus m_2) = t$. This verification fails if either share has been
modified.

\subsection{API call ${\bf Query}(q,d,C_1,\ldots,C_K)$}
\label{subsec:dualex}

To execute a query $q$ over encrypted shares $C_1,\ldots,C_k$ of a
query class $[Q, D, \texp]$, an analyst $d$ sends the two DPs of that
class their shares from $C_1, \ldots, C_k$ with the query $q$. The two
DPs independently authenticate $d$, and check that $q \in Q$, $d \in
D$ and current time $< \texp$. If both DPs are satisfied, they locally
decrypt their shares, and run a 2PC that verifies the shares,
reconstructs the plaintexts $m_1,\ldots,m_k$ from the shares, and
computes $q(m_1,\ldots,m_k)$, which is revealed only to the analyst.

For 2PC, the DPs use garbled circuits (GCs). Most actively-secure GC
protocols are orders of magnitude more costly than their
passively-secure counterparts. Consequently, we rely on
DualEx~\cite{dualex}, an actively secure GC protocol that is nearly as
fast as passively-secure protocols, and needs only twice as many
cores, but can leak 1 bit of information in the worst case. We admit
this 1-bit leak in exchange for DualEx's high efficiency relative to
other actively-secure protocols. Briefly, DualEx runs two instances of
a standard passively-secure GC protocol concurrently on separate core
pairs, with the roles of the two parties, conventionally called
\emph{generator} and \emph{evaluator}, reversed. Afterwards, the
results of the two runs are compared for equality using any actively
secure GC protocol.

\paragraph{Our extension to DualEx.}
Like most 2PC protocols, DualEx reveals the computation's result to
the two parties. However, we need a protocol that reveals the result
to the analyst but not to the two parties. For this, we combine DualEx
and MtS (\cref{subsec:macs}): We run DualEx to first compute $r =
q(m_1,\ldots,m_k)$ and then share $r$ two ways using MtS, resulting in
two authenticated shares $r_1, r_2$, each of which is output to one
party (by DualEx). Each party passes its share to the analyst, which
verifies the shares and then reconstructs the query result.

\paragraph{Security analysis and proofs.}
\label{subsec:security-analysis}
Our assumptions (\cref{subsec:threat-model}) that at least one TEE
 implementation is uncompromised and that data sources and analysts
 interact only with attested DPs together imply that at least one DP
 of each query class is fully honest. Hence, all GCs run securely. To
 argue end-to-end security, it only remains to show that our MtS
 scheme and modified DualEx protocol provide confidentiality and
 integrity, which we do in
\iftr
  \cref{app:proofs}.
\else
  the TR~\cite{covault-tr}, \S F. 
\fi

\section{{\sys} scalable analytics}
\label{sec:covault_design}

In this section, we discuss how we store data and compute queries
scalably in \sys. \iftr
(We defer a description of data ingress
processing to the appendix, \cref{sec:ingress}.)
\else
\fi

\subsection{{\sys} Databases}

\label{subsec:database}

So far, we have not discussed how the encrypted data shares, denoted
$C$ in
\cref{sec:fe-construction}--\cref{sec:full_construction_design}, are
managed. In our prototype, the encrypted data shares $C$ returned by
the {\bf Contribute} API are forwarded to the respective DPs by the
data sources. \emph{The DPs store these shares in local databases}. In
the {\bf Query} call, the encrypted data shares $C_1,\ldots,C_k$, on
which the query runs, are read directly from the databases; the
querying analyst does not provide them.

Shares stored in a DP's database are encrypted and MAC'ed with keys
known only inside the DP's TEE, so the databases are not a part of the
trusted computing base. How the databases are organized and what if
any processing on source data is performed by the DPs during data
ingress depends on the database schema and queries in the query
class. We sketch an example as part of our epidemic analytics scenario
in \cref{subsec:eva_scenario}. In general, data shares may be stored
in tables with both row- and column-level MACs to enable efficient
integrity checks when the data is read during query processing.

Next, we discuss database access challenges and solutions.

\paragraph{Problem \#1: Efficient oblivious DB random access.}
\sys's DPs access the parties' database (DB) of shares in order to
compute query results (specifically, in the API call {\bf Query}, the
inputs $C_1,\ldots, C_k$ are read by the DPs from their
databases). For the most part, DB tables are read sequentially in
their entirety. However, some queries need random access to specific
rows (for efficiency). The pattern of such accesses can leak secrets
if the locations of the accessed rows depend on secrets read earlier
from the DB (secret-dependent accesses).

In principle, we could implement ORAM~\cite{10.1145/233551.233553}
within 2PC to solve this problem; however, our experiments showed that
even the state-of-the-art Floram~\cite{10.1145/3133956.3133967} is
orders of magnitude more expensive than our solution described
below. The properties of private information retrieval (PIR) protocols
are closer to our requirements, but still have substantial
overhead~\cite{DBLP:journals/iacr/Corrigan-GibbsK19a,jofc-2001-14188,DBLP:conf/pet/OlumofinG10,DBLP:conf/nsdi/WangYGVZ17}. Another
option is to \emph{randomly permute secret-shared
tables}~\cite{sharedshuffle,DBLP:conf/asiaccs/HollandOW22,DBLP:conf/isw/LaurWZ11},
but these techniques either have substantial overhead or work only
with semi-honest adversaries. {\sys} instead relies on an oblivious
data retrieval (ODR) scheme, which achieves constant-time lookup at
the expense of off-line work.  (DB accesses where the accessed
locations are independent of secret data need not use ODR.)

\paragraph{Oblivious data retrieval (ODR).} 
\label{para:random-shuffling}
Our ODR scheme uses preprocessing inside 2PC followed by pseudorandom
table shuffling {\em outside 2PC}.  The scheme works as follows.

\smallskip
\noindent \emph{Preprocessing.} During data ingress, which runs in
2PC, we preprocess every table that requires ODR access. We
encrypt-then-MAC (EtM) each row, and separately EtM the row's primary
key. Secrets used to generate the EtMs can be recovered in 2PC only.

\smallskip
\noindent \emph{Shuffling.} A preprocessed table is shuffled by a pair
of DPs, $DP_1$ and $DP_2$, of different parties. The shuffling by the
DPs is performed \emph{outside 2PC} for efficiency.  First, $DP_1$
locally shuffles the table by \emph{obliviously sorting} rows,
ordering them by a keyed cryptographic hash over the primary key
EtMs. Oblivious sorting also creates a second layer of encryption over
every row. The keys used for hashing and encryption are freshly chosen
by $DP_1$. Next, $DP_2$ re-shuffles the already shuffled table, by
re-sorting the table along a keyed hash over $DP_1$'s hashes using a
fresh key. This doubly-shuffled table is stored in the DB indexed by
$DP_2$'s hashes.

\smallskip
\noindent \emph{Row lookup.} To access a row with a given primary key
in 2PC, the position of the row in the stored table is computed in
constant time by applying the EtM and the two keyed hashes to the
primary key. The position is revealed to the two parties, which fetch
the row from the shuffled table. Back in 2PC, the MAC of the row's EtM
is checked, the row is decrypted and the primary key stored within the
row is compared to the lookup key for equality.

\smallskip
\noindent \emph{Properties.} The ODR scheme obfuscates the dependence
of row locations on primary keys, and protects the integrity and
confidentiality of row contents from a malicious party. First,
recovering the order of rows in the doubly-shuffled table requires
keys of both parties, which only 2PC has. Second, neither party learns
any row individually since all data is encrypted by preprocessing
using shared keys. Third, any attempt to tamper with a row's data or
swap rows is detected by the checks on the fetched row.

Unlike PIR, our ODR scheme does not hide whether two lookups access
the same row. Therefore, to avoid frequency attacks, a shuffled table
is used for one query only, and a query never fetches the same row
twice.  Reshuffling can be done ahead of time, so that freshly
shuffled tables are readily available to queries (preprocessing
  happens once per table).

\paragraph{Optimization: Public index.}
\label{subsec:optimizations}
We can avoid the ODR overhead for queries that perform random access
only on {\em public} attributes by creating \textit{public
  indexes}. Public indexes can also speed up queries that
\textit{join} data on a public attribute (an example of this join
optimization is in
  \iftr
  \cref{subsec:epidemics_ip}).
  \else
  the TR~\cite{covault-tr}, \S D.2).
  \fi
Otherwise, data-oblivious joins can be very
expensive~\cite{Zheng2017}, even more so in 2PC.
In \cref{subsec:eva_scenario}, we show queries that exploit public
indexes.

\paragraph{Problem \#2: Detecting database roll-back.}
Malicious platform operators or parties may roll back the database to
an earlier version. In general, known techniques can be used to detect
rollback~\cite{rote-ussec17,parno2011memoir,nimble}, e.g., a secure,
persistent, monotonic counter within a TEE or TPM can be tied to the
most recent version of the database. In specific cases, such as the
epidemic analytics scenario of \cref{subsec:eva_scenario}, we can
instead exploit the fact that the database is append-only and
continuously indexed.  In this case, rollbacks other than truncation
can be detected trivially. To detect truncation, \sys\ additionally
provides a warning to the querier whenever records within the index
range specified in the query are missing.

\subsection{Executing analytics queries at scale}
\label{sec:qp-na}
\label{subsec:fga_queries}
\label{subsec:map-reduce}
\label{sec:query-proc}

The unit of querying in \sys\ is a standard SQL filter-groupby-aggregate (FGA)
query of the following form:\\
\mbox{\hspace{1mm}}\textsf{SELECT} \texttt{aggregate} ([\textsf{DISTINCT}] \texttt{column1}), \ldots\\
\mbox{\hspace{1mm}}\textsf{FROM} \texttt{T} 
\textsf{WHERE} \texttt{condition} 
\textsf{GROUP BY} \texttt{column2}\\
Here, \texttt{aggregate} is an aggregation operator like \textsf{SUM}
or \textsf{COUNT}. The query can be executed as follows: (i) filter (select)
from table \texttt{T} the rows that satisfy \texttt{condition}, (ii)
group the selected rows by \texttt{column2}, and (iii) compute the
required \texttt{aggregate} in each group.

\paragraph{Problem \#3: Horizontal scaling.}
A straightforward implementation of a FGA query would be a single
garbled circuit that takes as input the two shares of the
\emph{entire} database of the query class and implements steps
(i)---(iii). However, this approach does not take advantage of core
parallelism to reduce query latency. Moreover, the size of this
circuit grows super-linearly with the size of \texttt{T} and may
become too large to fit in the memory available on any one
machine. Instead, inspired by MapReduce~\cite{dean2004mapreduce},
\sys\ converts FGA queries into a set of small map and reduce
circuits, each of which fits one machine. Since queries tend to be
highly data parallel, most map and reduce circuits are data
independent, and can be executed in parallel 2PCs using all available
cores and servers in a datacenter.

Our map and reduce circuits use known oblivious algorithms: bitonic
sort~\cite{Batcher1968}), bitonic merge of sorted
lists~\cite{Batcher1968}) and a butterfly circuit for list
compaction~\cite[\S 3]{Goodrich2011}, which moves marked records to
the end of a list. Each of these is slower than the fastest
non-oblivious algorithm for the same task by a factor of $O(\log(N))$,
where $N$ is the input size.
  
\paragraph{Problem \#4: Integrity of intermediate results.}
When there is a data dependency between two circuits, integrity of the
data passed between the circuits has to be ensured. While we could
reuse our authenticated secret-sharing scheme, MtS, for this, MAC
computation in 2PC is expensive and we want to minimize it. We address
this problem by passing data between 2PC circuits in their
\emph{in-circuit} representation. While this increases the data
transmitted or stored by a factor of 256x in our prototype (128x for
the garbled coding of each bit and 2x for DualEx), we have empirically
found this to be faster than verifying and creating a MAC in each
mapper/reducer. Our design reduces in-2PC MACs to the minimum possible
in the common case: we verify one MAC for every batch of data
  uploaded by a data source, create per-column MACs when storing data
  in the database after data ingress, verify these MACs when the data
  is read for a query, and create one MAC for every query's
  output. Database accesses using ODR need additional MAC and hash
  operations as explained in \cref{subsec:database}.

\section{Evaluation}
\label{sec:evaluation}

Next, we present an experimental evaluation of {\sys}
to answer the following high-level questions: What is the cost of
basic query primitives and the pseudorandom shuffling required for
ODR? How does query latency scale with the number of cores for a
realistic set of epidemic analytics queries at scale? How does
  {\sys}'s performance compare to Senate?

\subsection{Prototype and experimental setup}

We implemented {\sys} using EMP-toolkit~\cite{emp-toolkit}, using
emp-sh2pc and emp-ag2pc~\cite{emp-toolkit} to implement DualEx and our
extension.  We also implement circuits for SHA3-256 and AES-128-CTR
(directly in emp-tool), primitives for filtering and compaction, and
the MtS of \cref{subsec:macs} (based on KMAC-256, which we implement
on top of a pre-existing circuit for Keccak). We compose these to
implement our ODR scheme, microbenchmarks and queries.  We use Redis
(v5.0.3, non-persistent mode) as the DB with the optimizations
mentioned in~\cref{subsec:optimizations}.  For the ODR scheme, we
implement the 2PC shuffle operation from~\cref{para:random-shuffling};
we hold shuffled views in memory on a DP (rather than in Redis) for
efficient re-shuffling.

Unless stated differently, the two parties execute on Intel and AMD
CPUs, respectively. We use 7 machines with Intel\tm Xeon\tm Gold 6244
3.60GHz 16-core processors and 495GB RAM each, and 7 machines with AMD
EPYC 7543 2.8GHz 32-core processors and 525 GB RAM each. All machines
are connected via two 1/10GB Broadcom NICs to a Cisco Nexus 7000
series switch. In addition, each pair of Intel and AMD machines is
directly connected via two 25Gbps links over Mellanox NICs; these are
used for 2PC.\footnote{Such network connectivity is common in
  datacenters.}

On both types of machines, we use Linux Debian 10, hosted inside VMs
managed using libvirt 5.0 libraries, the QEMU API v5.0.0, and the QEMU
hypervisor v3.1.0. On the AMD machines we run the VMs in SEV-SNP TEEs
provided in the sev-snp-devel branch of the AMDSEV
repo~\cite{amdsev-github}, along with their patched Qemu fork.  We use
at most 8 cores on each of the AMD and Intel machines (the same number
of cores are used on each type of machine).  On the Intel
  machines, we run the VMs without a TEE, because we have not been
  able to get access to TDX hardware with an appropriate configuration
  yet.  However, we are confident that our measured performance of 2PC
  is conservative, for several reasons:
(1) Intel TDX was designed to compete with ADM SEV-SNP in the
  server/Cloud market~\cite{google-cloud, azure-amd} and has similar
  capabilities, so one would expect comparable performance;
(2) indeed, published results for overhead over conventional VMs for
both SEV-SNP and TDX are comparable for both CPU and IO-bound
workloads~\cite{li2023bifrost,tdxperf} (The measured overhead of AMD SEV-SNP TEEs is 1.6\%
in our setup);
(3) the performance in our setup is limited by the AMD CPUs, which
have a 28\% slower clock rate than the Intel CPUS (2.80GHz vs
3.60GHzh) and do use TEEs. Therefore, our results remain conservative
even if TDX hardware were up to 28\% slower than SEV-SNP on our
workload.

\subsection{Microbenchmarks}
\label{sec:microbenchmarks}
We first report the costs of basic oblivious algorithms
(\cref{subsec:fga_queries}), mappers and reducers for a generic FGA
query (\cref{subsec:map-reduce}), and pseudorandom shuffling
(\cref{para:random-shuffling}). These primitive costs can be used to
estimate the total cost of arbitrary FGA queries, beyond the specific
epidemic analytics queries we evaluate in
\cref{sec:eval-querylatency}, assuming sufficient internal network
bandwidth, which is normally available in a datacenter.

\begin{figure}
	\centering
	\includegraphics[width=\columnwidth]{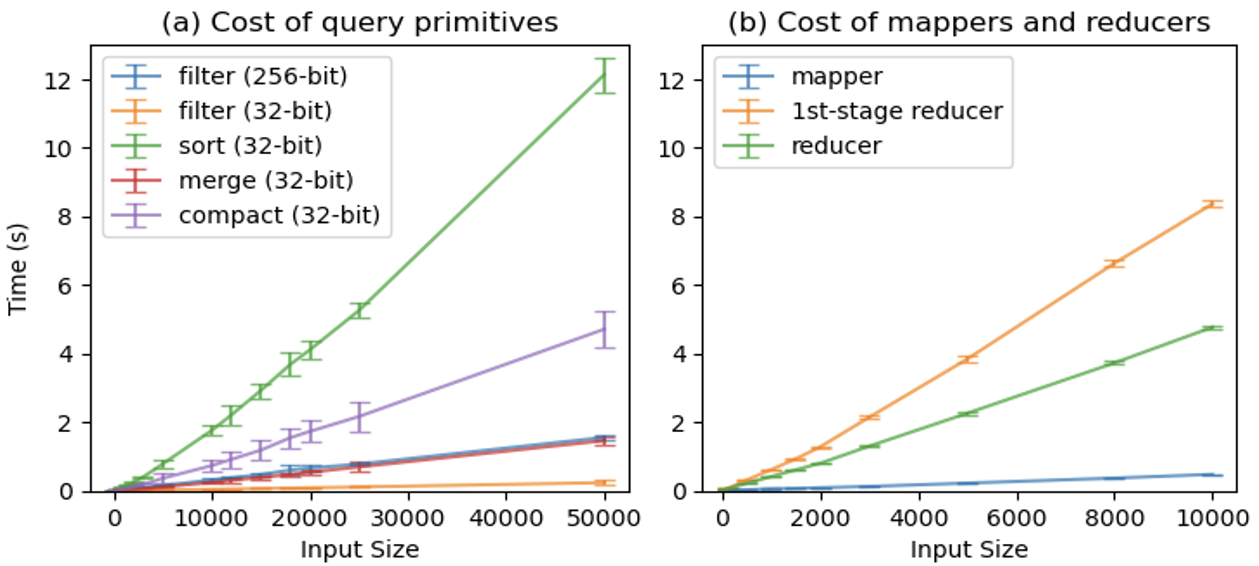}
	\caption{2PC processing time vs. input size for basic oblivious
          algorithms and mappers/reducers.}
	\label{fig:eval-triple}
	\label{fig:eval-primitives}
	\label{fig:eval-map-reduce}
\end{figure}

\paragraph{Query primitives.}
\Cref{fig:eval-primitives}a shows the time taken to execute the basic
oblivious query primitives from
\cref{subsec:fga_queries}---linear scans on 32-bit and 256-bit
records, sort, merging two sorted lists, and compact on 32-bit
records---as a function of the list length (input size). Each reported
number is an average of 100 measurements, with std. dev. shown
  as error bars.

The trends are as expected: The cost of linear scans grows linearly in
the input size, while the costs of sort, sorted merge and compact are
slightly super-linear. As expected, the cost of sorting is
significantly higher than that of compaction, which is why our reduce
trees sort only in the first stage and then use compact only.

\paragraph{DualEx versus AGMPC.}
To validate our choice of DualEx, we compare its performance against
AGMPC~\cite{agmpc}, a state-of-the-art malicious-secure
Boolean MPC protocol.  We sort 1000 32-bit inputs using two parties,
in AGMPC (executing on the faster Intel machines) versus DualEx (two
symmetric runs plus equality check) on two cores (one Intel, one
AMD).
DualEx is 7 times faster ($0.52s$ versus $5.9s$), confirming
that DualEx is the best choice for malicious-secure 2PC on our
workload
\iftr
(\cref{fig:agmpc}).
\else
(the TR~\cite{covault-tr} additionally shows this in a Figure.)
\fi
We also executed the sort in AGMPC with 3 and 4 parties, and
the runtime increased, in spite of using proportionally more
cores. This indicates that with current malicious-secure boolean MPC
protocols, 2PC is the best choice for performance, independent of the
number of different TEE types available.  

\iftr
\begin{figure}[h]
	\centering
	\includegraphics[width=0.4\textwidth]{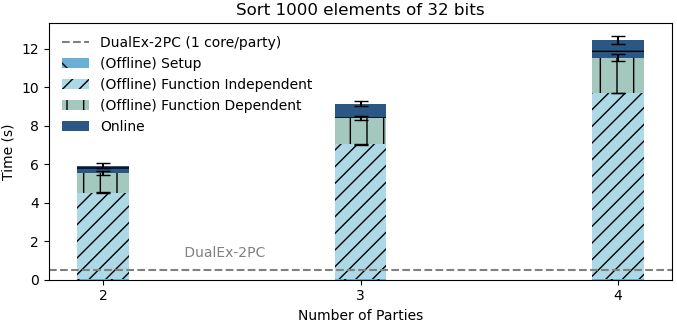}
	\caption{DualEx vs. AGMPC}
	\label{fig:agmpc}
\end{figure}
\else
\fi

\paragraph{MapReduce.} 
\Cref{fig:eval-map-reduce}b shows the average time taken to execute,
on a single core pair, typical mappers and reducers.  (The specific
operations are those of query q2 in \cref{fig:queries-eval}.)
The results are expected: Map costs are linear in the input
size, while reduce costs are slightly super-linear with the 1st-stage
of reduction being more expensive.

\paragraph{Shuffling.}
Each shuffle requires two sequential oblivious sorts in TEEs of
different types, outside 2PC.  Shuffling a view of 600M records
takes about 56min on a single core pair. Shuffling one-tenth the
number, 60M records, takes 4min.  The variances are negligible. The
scaling is super-linear because oblivious sort (even outside 2PC) runs
in $O(N (\log(N))^2)$ time.

A shuffle is used for only one query, but shuffles can be generated
ahead of time in parallel. Because a shuffle sorts on one machine
at a time, a single pair of cores can, in less than 1h, produce 2
different shuffles of 600M records---a conservative upper bound on the
pairwise encounters generated by a country of 80M people in 1h
in our epidemic analytics scenario
  (\cref{subsec:eva_scenario}). Hence, for a country of this
size, we can prepare shuffles for $q$ queries using $q/2$ pairs of
cores continuously.

\paragraph{Bandwidth.}
Garbled circuits are streamed from the generator to the
evaluator, and the generator also transmits the garbling of the
evaluator's inputs.  During a series of sort and linear scan
operations, the average bandwidth from the generator to the evaluator,
measured with NetHogs \cite{nethogs}, is $\sim$2.9Gbps.  The bandwidth
from the evaluator to the generator is negligible in comparison. With
DualEx, the average bandwidth would be 2.9Gbps in each
direction. Thus, 8 active cores on each machine need a bandwidth of
11.6Gbps in each direction, which our 2x25Gbps links can support
easily.

\subsection{End-to-end scenario: Epidemic analytics}
\label{subsec:eva_scenario}

We evaluate \sys\ at scale in the context of epidemic analytics as an
example scenario.  We note that query processing (MPC) overhead is
largely independent of the data semantics. Hence, the evaluation
generalizes to any scenario and queries with similar data sizes and
sequence of FGA operations.

The ingress processing for the epidemic analytics scenario collects
location and BLE radio encounter data from smartphones, and is
detailed in
\iftr 
\cref{sec:ingress}, \cref{subsec:epidemics_ip}.
\else
the TR~\cite{covault-tr} (\S C, D.2). 
\fi
In our evaluation, we use synthetic data.  Ingress results in two
materialized views, $T_E$ and $T_P$, whose schemas are shown in
\cref{fig:tab}. Each view has a \emph{public}, coarse-grained index,
which is shown in \emph{italics} font. $T_E$ is a list of pairwise
encounters with an encounter id (eid) and anonymous ids of the two
devices (did1, did2).
Its public index is a coarse-grained \emph{space-time-region}.

The second view $T_P$ contains encounters \textit{privately} indexed
by individual device ids and the times of the encounter reports (did1,
time). A record also contains pointers to the previous and the next
encounters, used to traverse the
timeline of a given device. The public index is a coarse-grained
\emph{epoch} ($\sim$1h) in which the encounter occurred.
$T_P$ is accessed through the private index by data-dependent queries, so
$T_P$ is shuffled per our ODR scheme (\cref{para:random-shuffling}).

\begin{figure}[t]
    \caption{Schema and queries used in epidemic analytics. The selections on the public attributes \texttt{space-time-region} and \texttt{epoch} are done outside 2PC using the public indexes of $T_E$ and $T_P$. \texttt{R} is a set of space-time-regions.}
  \label{fig:queries-eval}

    \label{fig:tab}
  \centering
  \scalebox{0.8}{
      {\small %
   	\setlength\tabcolsep{1.5pt}
   	\begin{tabular}{@{} c | c | c | c | c | c | c | c | @{}}
   	  \cline{2-8}
   	  $T_E$ & \textit{space-time-region} & eid & did1 & did2 & \multicolumn{3}{c|}{\ldots} \\
   	  \hhline{~=======}
   	  $T_P$ & \textit{epoch} & did1, time & did2 & duration & prev & next & \ldots \\
   	  \cline{2-8}
   	\end{tabular}
   	\setlength\tabcolsep{6pt} %
  } }

  \centering
  \scalebox{0.8}{
  \begin{tabular}{@{}|l|@{}}
    \hline
    \begin{tabular}{@{}l@{~}p{0.4\textwidth}@{}}
      \textbf{(q1)} & Histogram of \#encounters, in space-time regions \texttt{R}, of devices in set \texttt{A}
    \end{tabular}\\

    \begin{tabular}{@{}l@{}}
    \small{\texttt{SELECT HISTO(COUNT(*)) FROM $T_E$}}\\
    \small{\texttt{WHERE did1$\,\in\,$A AND space-time-region$\,\in\,$R}}
    \end{tabular}\\
    \hline
    \begin{tabular}{@{}l@{~}p{0.4\textwidth}@{}}
      \textbf{(q2)} & Histogram of \#unique devices met, in space-time regions \texttt{R}, by each device in set \texttt{A} 
    \end{tabular}\\
    \begin{tabular}{@{}l@{}}
      \small \texttt{SELECT HISTO(COUNT(DISTINCT(did2))) FROM $T_E$}\\
      \small \texttt{WHERE did1$\,\in\,$A AND space-time-region$\,\in\,$R}
    \end{tabular}\\
    \hline
    \begin{tabular}{@{}l@{~}p{0.4\textwidth}@{}}
      \textbf{(q3)} & Count \#devices in set \texttt{B} that encountered a device in
      set \texttt{A} in the time interval \texttt{[start,end]}
    \end{tabular}\\
    \begin{tabular}{@{}l@{}}
     \small \texttt{WITH TT AS} \\ 
     \small \texttt{~(SELECT * FROM $T_P$} \\
     \small \texttt{~~WHERE start$\,<\,$epoch$\,<\,$end)} \\ 
      \small \texttt{SELECT COUNT(DISTINCT(did2)) FROM TT}\\
      \small \texttt{WHERE did1$\,\in\,$A AND did2$\,\in\,$B} \\
      \small \texttt{~~~~~~AND start$\,<\,$time$\,<\,$end}
    \end{tabular}\\
    \hline
  \end{tabular}}
\end{figure}

Our evaluation uses the queries q1--q3 shown in
\cref{fig:queries-eval}, developed in consultation with an
epidemiologist.
Such queries can be used to understand the impact of contact
restrictions (such as the closure of large events) on the frequency of
contacts between people during epidemics.
Query 3 can be used to determine if two outbreaks of an epidemic
(corresponding to the sets of devices \texttt{A} and \texttt{B}) are
directly connected. A related query, discussed in
\iftr \cref{sec:eval-additional-query}, \else the TR~\cite{covault-tr} (\S
D.4),
\fi
extends q3 to indirect encounters between \texttt{A} and \texttt{B}
via a third device.

\begin{figure}
	\centering
    \includegraphics[scale=0.45]{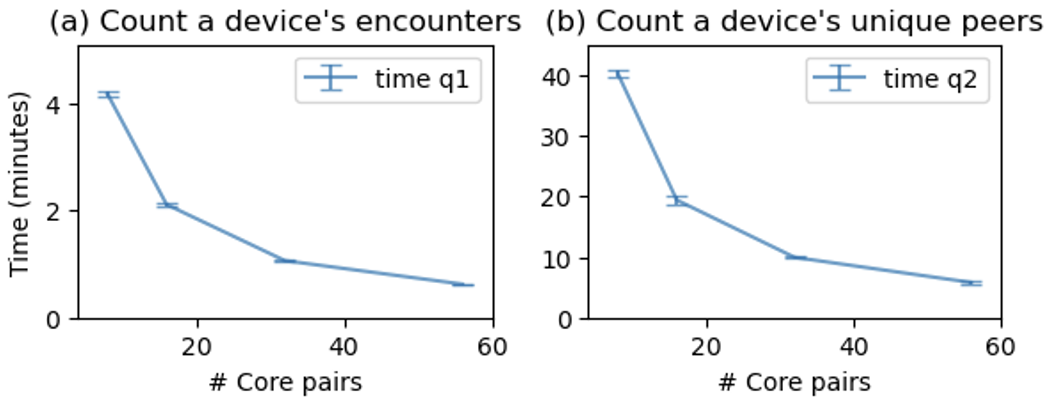}
	\caption{Query latency vs.\ total number of core pairs 
		(including the parallel DualEx execution) 
		for the FGA
		query of (q1) and (q2) from \cref{fig:queries-eval}. 
		The measurements are the average of 20 runs, 
		with std. dev. shown as error bars.}
	\label{fig:eval-query-latency-actual}
	\label{fig:eval-query-latency-actual:count}
	\label{fig:eval-query-latency-actual:set}
\end{figure}

\label{sec:eval-querylatency}

We measured end-to-end query latency, as a function of
  input size and core count, for the queries q1--q3 of
\cref{fig:queries-eval}.

\paragraph{Queries q1 and q2.}
Queries q1 and q2 run on the view $T_E$. The queries fetch only the
records that are in the space-time region \texttt{R}, using the public
index of $T_E$, which significantly reduces the size of the input
table and the query latency. In both queries, we iterate over devices
in the given set \texttt{A}. For each device $a$ in
  \texttt{A} we issue a FGA query. For q1, this FGA query counts the
encounters device $a$ had in \texttt{R}.  For q2, the FGA query counts
the number of unique devices that $a$ met. FGA queries for all
devices in \texttt{A} can execute in parallel.

\Cref{fig:eval-query-latency-actual:count}(a) shows the latency of q1
against the number of available core pairs.
As input, we use a table with 28M encounter records (corresponding to
a conservative upper bound on encounters generated in a space-time
region with 10k data sources reporting 200 encs/day over 14 days).  We
process these records in shards of size 10k, which minimizes the
latency empirically. Mappers filter input encounters to those
involving the specific user $a$, and reducers aggregate the number of
encounters. As \Cref{fig:eval-query-latency-actual:count}(a) shows,
the latency is almost inversely proportional to the number of core
pairs available, showing near-linear horizontal scaling.

\Cref{fig:eval-query-latency-actual:set}(b) shows the latency of q2 in
the same setup.  The map phase is unchanged, but each reduce combines
and removes duplicates from two lists of encountered devices.
Again, query latency varies almost inversely with the number of cores,
but is higher than for q1 since first-stage reducers do more work in
q2.

\paragraph{Overheads: Cost breakdown for q2.}
For q2, we measured the latency contributions of different design
elements. A single (semi-honest) GC execution accounts for 92.2\% of
the reported latency, adding the parallel DualEx execution on a
separate pair of cores plus the DualEx equality check adds 6.2\% to
the latency, and executing inside a TEE adds a mere 1.6\%. A modified
query that includes differentially private noise in the result runs
with negligible additional cost, on the order of milliseconds.

\paragraph{Scaling to many cores.}
From measurements of individual mappers and reducers in queries, we
can extrapolate the number of machines needed to attain a certain
query latency with a given number of input records. For example, if
the input was 10 times larger (280M records), answering q2's basic
query in 10min needs 736 core pairs (e.g., 46 16-CPU machines).
The details of how we extrapolate are provided in
\iftr
\cref{sec:mr-extrapolation}.
\else the TR~\cite{covault-tr}.
\fi

Given our measured bandwidth for 2PC from
  \cref{sec:microbenchmarks}, executing this query with 736 cores
  requires a total bisection bandwidth of over 1Tbps between pairs of
  servers hosting the parties. A datacenter can easily meet this
  demand among pairs of servers in the same rack, but a distributed
  MPC system would require similar bandwidth across datacenters! Thus,
  colocation is a key to scale-out in MPC-based computations.

\paragraph{Scaling to a country.} To perform epidemic analytics for
an entire country with 80M people, we estimate that continuously
ingesting all incoming encounter records (11.85B records/day) requires
1,660 core pairs on a continuous basis (see
  \iftr \cref{sec:eval-ingress}),
  \else the TR~\cite{covault-tr}, \S D.3),
  \fi
and running q2 on 14 days of such records (165.9B records) within 24h
requires an additional 2,320 core pairs engaged for those 24h. Looked at differently, a core pair is required for
roughly every 20,000 citizens at this scale.

\paragraph{Query q3.}
Query q3 runs on $T_P$.  A naive implementation of q3 in 2PC requires
a linear scan of all encounter records during the period of interest
(\texttt{[start,end]}) to find those between two peers in sets
\texttt{A} and \texttt{B}. Based on our earlier estimate, for a table
of 80M people over a 14-day period, this would require loading and
filtering $165.9$B records \emph{in 2PC}. To make this more efficient,
we rely on $T_P$'s private index on one of the encounter peers, and
use our shuffled ODR to securely load only those encounters that
involve a peer from set \texttt{B}. For \texttt{A} and \texttt{B} of
size 10 each, this reduces the number of records read and processed in
2PC by 5 orders of magnitude to \textasciitilde$100$K. The total query
latency on 2 pairs of cores in our setup with these parameters is 2.24
hours.  We refer to
\iftr \cref{sec:eval-additional-query}
\else the TR~\cite{covault-tr} (\S D.4)
\fi
for additional details of the view $T_P$ and q3's implementation.

\subsection{{\sys} versus Senate}

\paragraph{Senate.}
Since Senate's source code is not available, we implemented its
$m$-Sort primitive (merge of sorted lists) in {\sys} and also in
AGMPC, which Senate uses as its baseline and is equivalent to Senate
for two parties.  In {\sys}, we run $m$-Sort in a single DP with
DualEx, but using only 2 cores to match the number of cores available
to AGMPC with two parties.  On an input of 600 32-bit integers, {\sys}
executes $m$-Sort more than 7 times faster than AGMPC ($0.19s$ versus
$1.36s$).

Next, we consider Senate configurations with more than two parties,
since that is where Senate improves over AGMPC and because it is
required for Senate to support more than two data sources. We take
Senate's published results in a LAN setting with 4 and 16 parties, 600
inputs/party and 16 parties, 1400
inputs/party~\cite[Fig.\ 5]{Senate21}. On the same sized inputs,
{\sys} with 2 cores completes the query 129, 238, and 239 times faster
($.31s$ vs. $40s$, $1.93$ vs. $460s$, $4.8s$ vs. $1150s$),
respectively. Also, the overhead of Senate strongly increases with the
number of parties (equal to the number of sources in Senate), while
{\sys} benefits from the fact that it executes with two parties
regardless of the number of sources.  This shows the benefits
of decoupling data sources from MPC parties in {\sys}, even in the
range of up to a modest 16 sources.

\paragraph{Scaling limits of distributed MPC.}
Based on published runtime and network usage results for Senate's
$m$-Sort circuit with 16 parties~\cite[Figs.\ 5a, 7b]{Senate21}, we
can estimate its average bandwidth during the execution as 4.8Gbps
(480GB/800s). We can use this result to estimate its total average
bandwidth if one tried to perform parallel subqueries on a large
dataset (for which it was not designed).  For instance, when executing
query q2 on 28M records (see \cref{sec:eval-querylatency}) and
assuming sufficient cores, Senate would have to perform 1750 (28M/16k)
parallel sorts in the first stage, resulting in a total average WAN
bandwidth of 8.4Tbps. We mention this not as a critique of Senate,
which was not designed for big data analytics, but to show that even
when using a relatively bandwidth efficient MPC protocol like AGMPC,
distributed MPC bottlenecks on WAN bandwidth.  In contrast, due to
{\sys}'s single-datacenter colocation, its scale-out to big data is
not limited by bandwidth.

\section{Related work}
\label{sec:related}

\sys\ is the first MPC platform that uses multiple TEE types to
colocate the computing parties and to decouple these parties from the
data sources.

\paragraph{Encrypted databases} like CryptDB~\cite{popa2011} 
encrypt data at rest. Early work uses weak encryption like
deterministic or order-preserving encryption, later shown to be
insecure~\cite{Grubbs2017,Bindschaedler2018}. Blind
Seer~\cite{DBLP:conf/sp/PappasKVKMCGKB14} uses strong encryption and
2PC to traverse a specialized index, but leaks information about its
search tree traversal. It also restricts the set of queries.

\paragraph{TEE-backed platforms}~\cite{Maheshwari2000,Vingralek2002,Bajaj2011,Arasu2013,Baumann2015,Schuster2015,ryoan-osdi16,fuhry2017,DBLP:conf/sp/PriebeVC18,oasis-labs1,oasis-labs2}
protect data at rest and in use by decrypting data only inside a
TEE. However, these systems rely on a single TEE, which is fully
trusted (unlike {\sys}) and they do not mitigate well-known
side-channel leaks in TEEs~\cite{nilsson2020survey}.

\paragraph{TEE-backed data-oblivious platforms}
use oblivious algorithms in addition to TEEs to mitigate side-channel
leaks~\cite{190910,Zheng2017,stealthdb,DBLP:conf/ccs/OhrimenkoCFGKS15,197247,Arasu2013oblivious,oblidb,10.1145/3342195.3387552}.
In almost all cases, the papers propose optimized oblivious algorithms
for one or more computational problems. %
Again, these systems use a single TEE and do not decentralize trust.
{\sys}'s threat model is stronger, as it decentralizes trust among
multiple TEEs and a community attestation process.

\paragraph{Secret-shared data analytics}
aim to decentralize trust. Obscure~\cite{Gupta2019} supports
aggregation queries on a secret-shared dataset outsourced by a set of
owners; however, it does not support big data and nested
queries. Crypt$\epsilon$~\cite{Chowdhury20} and
GraphSC~\cite{Nayak2015} outsource computation to two untrusted
non-colluding servers, but assume only semi-honest adversaries.
Waldo~\cite{9833611} uses secret sharing and honest-majority 3PC 
but focuses on outsourced analytics of \textit{time-series}
data from a \emph{single} source.
Unlike \sys, all three systems rely on non-technical means to mitigate
correlated attacks.

\paragraph{Collaborative or cooperative analytics}
refers to MPC-based analytics where the parties are also the data
sources.  SMCQL~\cite{Bater2017} and Conclave~\cite{Volgushev2019}
consider only passive security.  Senate~\cite{Senate21} provides
active security; we have compared it to {\sys} extensively in earlier
sections. Senate automatically partitions query plans to run each
subquery on exactly the subset of parties whose data the subquery
accesses. In deployments of {\sys} with more than two parties, similar
techniques could be used to reduce query latency.
Generally, {\sys} goes beyond cooperative analytics in its use of
multiple TEEs to decouple the number of data sources from the number
of MPC parties and to colocate the MPC parties in a datacenter.

\paragraph{A combination of MPC and multiple TEEs} has been proposed in
a recent paper \cite{dauterman2022reflections}, which sketches how
multiple TEEs can bootstrap decentralized trust and states that,
ideally, the TEEs use hardware from different
vendors. DeCloak~\cite{ren2023decloak} use a combination of MPC and
TEEs to obtain a protocol for fair multi-party transactions in
blockchains. CryptFlow~\cite{cryptflow}, an MPC framework for secure
inference with TensorFlow, runs the parties in fully-trusted TEEs to
get active security out of passively-secure protocols. A recent
paper~\cite{wuhybrid} describes a database combining MPC and TEEs;
however, the system assumes trusted clients (analysts) and was not
shown to support group-by aggregates and scale to large
databases. None of these papers consider the party-colocation and the
source-party decoupling benefits of TEEs, which are our focus.

All work mentioned so far in this section assumes that data is owned
by the computing parties and, hence, does not consider selective
forward consent (unlike {\sys}).

\paragraph{Homomorphic encryption (HE)}
can also be used for decentralized analytics. Full HE is prohibitively
expensive~\cite{sok-fhe}.  Partial HE (PHE) restricts query
expressiveness significantly, and works efficiently only in weak
threat models. For instance, Seabed~\cite{papadimitriou2016} works in
a semi-honest setting, extending at most to frequency attacks.
TimeCrypt~\cite{burkhalter2020timecrypt} and
Zeph~\cite{burkhalter2021zeph} allow data sources to specify access
control preferences (similar to selective forward consent); however,
they specifically target \textit{time-series} data with a restricted
set of operations (additions but not multiplications). Also, Zeph
operates in a semi-honest threat model. Neither type of HE by itself
provides protection from side-channel leaks.

\paragraph{Federated analytics (FA)}
leaves data in the hands of the data sources, thus ensuring
confidentiality and selective control. ShrinkWrap \cite{BaterHEMR18}
only considers semi-honest adversaries, while Roth et
al.~\cite{roth-2019-honeycrisp,roth-2020-orchard,sosp21-mycelium} and
Arboretum~\cite{sosp23-arboretum} extend FA to malicious
adversaries. As explained in \cref{sec:intro}, these systems support
many data sources but run each query on the data from a subset of
sources that happen to be online at the time. Some FA frameworks
support only restricted classes of queries (unlike MPC, which is
general).

\paragraph{Automatic optimization of MPC}
has been considered in Arboretum and several MPC frameworks --
\cite{mp-spdz,Bater2017,Volgushev2019,hycc,costco,silph,ezpc} to name
a few. Heuristics automatically parallelize and optimize the
computation's data flow and, in some cases, partition the data flow
and select optimal MPC protocols for all partitions. These
optimization techniques are orthogonal to {\sys}'s design and can be
applied to its future implementations.

\section{Conclusion}
\label{sec:conclusion}

{\sys} enables individual data sources to safely delegate control over
their encrypted data to a set of colocated MPC parties, each executing
in a TEE of a different type. Because {\sys} is not impeded by WAN
network communication during query processing, it can scale out and
leverage the resources of a datacenter to support big data analytics
over data from individual sources, with distributed trust and
malicious security.  {\sys} is $7$ to over $100$ times faster than
existing MPC-based systems on few data sources and small data, while
also scaling out to epidemic analytics queries for a country of 80M
people and billions of records.

\section{Acknowledgments}

We would like to thank Gilles Barthe, Jonathan Katz, Matthew Lentz, Chang Liu, and Elaine Shi for their helpful suggestions and contributions to the initial version of this work; Xiao Wang and 
Chenkai Weng, for their precious insights on emp-toolkit; Lorenzo Alvisi, Natacha Crooks, Aastha Mehta, Lillian Tsai, Vaastav Anand, and the anonymous reviewers, for their insightful feedback on prior versions of this work.
This work was supported in part by the European Research Council (ERC Synergy imPACT 610150) and the German Science Foundation (DFG CRC 1223).
\bibliographystyle{plain}
\bibliography{main_abbr}

\begin{thebibliography}{10}

\bibitem{amdsev-github}
{GitHub}: {AMDSEV} repo.
\newblock \url{https://github.com/AMDESE/AMDSEV/tree/sev-snp-devel}.
\newblock Accessed: 2023-12-07.

\bibitem{nethogs}
{GitHub}: Nethogs repo.
\newblock \url{https://github.com/raboof/nethogs}.
\newblock Accessed: 2023-12-07.

\bibitem{dga}
{Council approves Data Governance Act}, 2022.
\newblock \url{https://tinyurl.com/consilium-europa-dga}.

\bibitem{tdxperf}
{Performance Considerations of Intel(R) Trust Domain Extensions on 4th
  Generation Intel(R) Xeon(R) Scalable Processors}, 2023.
\newblock
  \url{https://www.intel.com/content/www/us/en/developer/articles/technical/trust-domain-extensions-on-4th-gen-xeon-processors.html}.

\bibitem{amdsev-snp}
AMD.
\newblock {AMD SEV-SNP}: {S}trengthening {VM} {I}solation with {I}ntegrity
  {P}rotection and {M}ore.
\newblock White paper at
  \url{https://www.amd.com/system/files/TechDocs/SEV-SNP-strengthening-vm-isolation-with-integrity-protection-and-more.pdf},
  2020.
\newblock Accessed: 2023-12-07.

\bibitem{nimble}
Sebastian Angel, Aditya Basu, Weidong Cui, Trent Jaeger, Stella Lau, Srinath
  T.~V. Setty, and Sudheesh Singanamalla.
\newblock Nimble: Rollback protection for confidential cloud services.
\newblock In Roxana Geambasu and Ed~Nightingale, editors, {\em 17th {USENIX}
  Symposium on Operating Systems Design and Implementation, {OSDI} 2023,
  Boston, MA, USA, July 10-12, 2023}, pages 193--208. {USENIX} Association,
  2023.

\bibitem{appleGoogleContactTracing}
Apple and Google.
\newblock {Privacy-Preserving Contact Tracing}.
\newblock \url{https://www.apple.com/covid19/contacttracing}.
\newblock Accessed: 2021-12-14.

\bibitem{Arasu2013}
A.~Arasu, S.~Blanas, K.~Eguro, R.~Kaushik, D.~Kossmann, R.~Ramamurthy, and
  R.~Venkatesan.
\newblock {Orthogonal Security With Cipherbase}.
\newblock In {\em 6th Biennial Conference on Innovative Data Systems Research
  (CIDR'13)}, January 2013.

\bibitem{Arasu2013oblivious}
A.~Arasu and R.~Kaushik.
\newblock {Oblivious Query Processing}.
\newblock {\em CoRR}, abs/1312.4012, 2013.

\bibitem{azure-amd}
Azure.
\newblock {Azure and AMD announce landmark in confidential computing evolution
  }.
\newblock
  \url{https://azure.microsoft.com/en-us/blog/azure-and-amd-enable-lift-and-shift-confidential-computing/}.

\bibitem{Bajaj2011}
S.~Bajaj and R.~Sion.
\newblock {TrustedDB: A Trusted Hardware Based Database with Privacy and Data
  Confidentiality}.
\newblock In {\em Proceedings of the 2011 ACM SIGMOD International Conference
  on Management of Data}, SIGMOD '11, pages 205--216, New York, NY, USA, 2011.
  Association for Computing Machinery.

\bibitem{pancast}
G.~Barthe, R.~De~Viti, P.~Druschel, D.~Garg, M.~Gomez-Rodriguez, P.~Ingo,
  M.~Lentz, A.~Mehta, and B.~Sch{\"o}lkopf.
\newblock {PanCast: Listening to Bluetooth Beacons for Epidemic Risk
  Mitigation}.
\newblock {\em arXiv preprint arXiv:2011.08069}, 2020.

\bibitem{Batcher1968}
K.~E. Batcher.
\newblock {Sorting Networks and Their Applications}.
\newblock In {\em Proceedings of the April 30--May 2, 1968, Spring Joint
  Computer Conference}, AFIPS 1968 (Spring), pages 307--314, New York, NY, USA,
  1968. Association for Computing Machinery.

\bibitem{Bater2017}
J.~Bater, G.~Elliott, C.~Eggen, S.~Goel, A.~Kho, and J.~Rogers.
\newblock {SMCQL: Secure Querying for Federated Databases}.
\newblock {\em Proceedings of the VLDB Endowment}, 10(6):673--684, February
  2017.

\bibitem{BaterHEMR18}
J.~Bater, X.~He, W.~Ehrich, A.~Machanavajjhala, and J.~Rogers.
\newblock {ShrinkWrap: Efficient {SQL} Query Processing in Differentially
  Private Data Federations}.
\newblock {\em Proc. {VLDB} Endow.}, 12(3):307--320, 2018.

\bibitem{Baumann2015}
A.~Baumann, M.~Peinado, and G.~Hunt.
\newblock {Shielding Applications from an Untrusted Cloud with Haven}.
\newblock {\em ACM Trans. Comput. Syst.}, 33(3), August 2015.

\bibitem{Bindschaedler2018}
V.~Bindschaedler, P.~Grubbs, D.~Cash, T.~Ristenpart, and V.~Shmatikov.
\newblock {The Tao of Inference in Privacy-Protected Databases}.
\newblock {\em Proceedings of the VLDB Endowment}, 11(11):1715--1728, July
  2018.

\bibitem{burkhalter2020timecrypt}
L.~Burkhalter, A.~Hithnawi, A.~Viand, H.~Shafagh, and S.~Ratnasamy.
\newblock {TimeCrypt}: Encrypted data stream processing at scale with
  cryptographic access control.
\newblock In {\em 17th USENIX Symposium on Networked Systems Design and
  Implementation (NSDI 20)}, pages 835--850, 2020.

\bibitem{burkhalter2021zeph}
L.~Burkhalter, N.~K{\"u}chler, A.~Viand, H.~Shafagh, and A.~Hithnawi.
\newblock Zeph: Cryptographic enforcement of end-to-end data privacy.
\newblock In {\em 15th USENIX Symposium on Operating Systems Design and
  Implementation (OSDI'21)}, pages 387--404, 2021.

\bibitem{hycc}
Niklas B{\"{u}}scher, Daniel Demmler, Stefan Katzenbeisser, David Kretzmer, and
  Thomas Schneider.
\newblock Hycc: Compilation of hybrid protocols for practical secure
  computation.
\newblock In David Lie, Mohammad Mannan, Michael Backes, and XiaoFeng Wang,
  editors, {\em Proceedings of the 2018 {ACM} {SIGSAC} Conference on Computer
  and Communications Security, {CCS} 2018, Toronto, ON, Canada, October 15-19,
  2018}, pages 847--861. {ACM}, 2018.

\bibitem{tz_sok}
D.~Cerdeira, N.~Santos, P.~Fonseca, and S.~Pinto.
\newblock {SoK: Understanding the Prevailing Security Vulnerabilities in
  TrustZone-assisted TEE Systems}.
\newblock {\em 2020 IEEE Symposium on Security and Privacy (SP)}, pages
  1416--1432, 2020.

\bibitem{ezpc}
Nishanth Chandran, Divya Gupta, Aseem Rastogi, Rahul Sharma, and Shardul
  Tripathi.
\newblock Ezpc: Programmable and efficient secure two-party computation for
  machine learning.
\newblock In {\em {IEEE} European Symposium on Security and Privacy, EuroS{\&}P
  2019, Stockholm, Sweden, June 17-19, 2019}, pages 496--511. {IEEE}, 2019.

\bibitem{sharedshuffle}
M.~Chase, E.~Ghosh, and O.~Poburinnaya.
\newblock Secret shared shuffle.
\newblock Cryptology ePrint Archive, Paper 2019/1340, 2019.
\newblock \url{https://eprint.iacr.org/2019/1340}.

\bibitem{silph}
Edward Chen, Jinhao Zhu, Alex Ozdemir, Riad~S. Wahby, Fraser Brown, and Wenting
  Zheng.
\newblock Silph: {A} framework for scalable and accurate generation of hybrid
  {MPC} protocols.
\newblock In {\em 44th {IEEE} Symposium on Security and Privacy, {SP} 2023, San
  Francisco, CA, USA, May 21-25, 2023}, pages 848--863. {IEEE}, 2023.

\bibitem{Chowdhury20}
A.~R. Chowdhury, C.~Wang, X.~He, A.~Machanavajjhala, and S.~Jha.
\newblock Crypt$\epsilon$: Crypto-assisted differential privacy on untrusted
  servers.
\newblock In {\em Proc.\ SIGMOD}. {ACM}, 2020.

\bibitem{google-cloud}
Google Cloud.
\newblock {Security through collaboration: Building a more secure future with
  Confidential Computing }.
\newblock
  \url{https://cloud.google.com/blog/products/identity-security/google-amd-partner-to-build-a-more-secure-future-with-confidential-computing}.

\bibitem{digital-strategy}
European Commission.
\newblock {European data governance act}.
\newblock
  \url{https://digital-strategy.ec.europa.eu/en/policies/data-governance-act}.

\bibitem{european-data-strategy}
European Commission.
\newblock {European data strategy}.
\newblock
  \url{https://commission.europa.eu/strategy-and-policy/priorities-2019-2024/europe-fit-digital-age/european-data-strategy_en}.

\bibitem{intel-sgx-sealing}
Intel Corp.
\newblock {Introduction to Intel SGX Sealing}.
\newblock
  \url{https://www.intel.com/content/www/us/en/developer/articles/technical/introduction-to-intel-sgx-sealing.html}.

\bibitem{DBLP:journals/iacr/Corrigan-GibbsK19a}
H.~Corrigan{-}Gibbs and D.~Kogan.
\newblock {Private Information Retrieval with Sublinear Online Time}.
\newblock {\em {IACR} Cryptol. ePrint Arch.}, 2019:1075, 2019.

\bibitem{sgx_explained}
V.~Costan and S.~Devadas.
\newblock {Intel SGX Explained}.
\newblock {\em IACR Cryptol. ePrint Arch.}, 2016:86, 2016.

\bibitem{jofc-2001-14188}
G.~Di Crescenzo, Y.~Ishai, and R.~Ostrovsky.
\newblock {Universal Service-Providers for Private Information Retrieval}.
\newblock {\em J. Cryptology}, 14:37--74, 2001.

\bibitem{damgaard2012multiparty}
I.~Damg{\aa}rd, V.~Pastro, N.~Smart, and S.~Zakarias.
\newblock Multiparty computation from somewhat homomorphic encryption.
\newblock In {\em Annual Cryptology Conference}, pages 643--662. Springer,
  2012.

\bibitem{dauterman2022reflections}
E.~Dauterman, V.~Fang, N.~Crooks, and R.~A. Popa.
\newblock Reflections on trusting distributed trust.
\newblock In {\em Proceedings of the 21st ACM Workshop on Hot Topics in
  Networks}, pages 38--45, 2022.

\bibitem{9833611}
E.~Dauterman, M.~Rathee, R.~A. Popa, and I.~Stoica.
\newblock Waldo: A private time-series database from function secret sharing.
\newblock In {\em 2022 IEEE Symposium on Security and Privacy (SP)}, pages
  2450--2468, 2022.

\bibitem{10.1145/3342195.3387552}
A.~Dave, C.~Leung, R.~A. Popa, J.~E. Gonzalez, and I.~Stoica.
\newblock {Oblivious Coopetitive Analytics Using Hardware Enclaves}.
\newblock In {\em Proceedings of the Fifteenth European Conference on Computer
  Systems}, EuroSys '20, New York, NY, USA, 2020. Association for Computing
  Machinery.

\bibitem{dean2004mapreduce}
J.~Dean and S.~Ghemawat.
\newblock {MapReduce: Simplified Data Processing on Large Clusters}.
\newblock 2004.

\bibitem{190910}
T.~T.~A. Dinh, P.~Saxena, E.~Chang, B.~C. Ooi, and C.~Zhang.
\newblock {M2R: Enabling Stronger Privacy in MapReduce Computation}.
\newblock In {\em 24th {USENIX} Security Symposium ({USENIX} Security 15)},
  pages 447--462, Washington, D.C., August 2015. {USENIX} Association.

\bibitem{10.1145/3133956.3133967}
J.~Doerner and A.~Shelat.
\newblock Scaling oram for secure computation.
\newblock In {\em Proceedings of the 2017 ACM SIGSAC Conference on Computer and
  Communications Security}, CCS '17, page 523–535, New York, NY, USA, 2017.
  Association for Computing Machinery.

\bibitem{sha3-nist}
M.~Dworkin.
\newblock {SHA-3 Standard: Permutation-Based Hash and Extendable-Output
  Functions}, 2015-08-04 2015.

\bibitem{eskandarian2021clarion}
S.~Eskandarian and D.~Boneh.
\newblock Clarion: Anonymous communication from multiparty shuffling protocols.
\newblock {\em Cryptology ePrint Archive}, 2021.

\bibitem{oblidb}
S.~Eskandarian and M.~Zaharia.
\newblock {ObliDB}: Oblivious query processing for secure databases.
\newblock {\em Proc. {VLDB} Endow.}, 13(2):169--183, 2019.

\bibitem{mpc-evans}
David Evans, Vladimir Kolesnikov, and Mike Rosulek.
\newblock A pragmatic introduction to secure multi-party computation.
\newblock {\em Foundations and Trends® in Privacy and Security},
  2(2-3):70--246, 2018.

\bibitem{costco}
Vivian Fang, Lloyd Brown, William Lin, Wenting Zheng, Aurojit Panda, and
  Raluca~Ada Popa.
\newblock Costco: An automatic cost modeling framework for secure multi-party
  computation.
\newblock In {\em 7th {IEEE} European Symposium on Security and Privacy,
  EuroS{\&}P 2022, Genoa, Italy, June 6-10, 2022}, pages 140--153. {IEEE},
  2022.

\bibitem{fuhry2017}
B.~Fuhry, R.~Bahmani, F.~Brasser, F.~Hahn, F.~Kerschbaum, and A.~R. Sadeghi.
\newblock {HardIDX: Practical and secure index with SGX}.
\newblock In {\em IFIP Annual Conference on Data and Applications Security and
  Privacy}, pages 386--408. Springer, 2017.

\bibitem{10.1145/233551.233553}
O.~Goldreich and R.~Ostrovsky.
\newblock {Software Protection and Simulation on Oblivious RAMs}.
\newblock {\em J. ACM}, 43(3):431–473, May 1996.

\bibitem{Goodrich2011}
M.~T. Goodrich.
\newblock {Data-Oblivious External-Memory Algorithms for the Compaction,
  Selection, and Sorting of Outsourced Data}.
\newblock In {\em Proceedings of the Twenty-Third Annual ACM Symposium on
  Parallelism in Algorithms and Architectures}, SPAA 2011, pages 379--388, New
  York, NY, USA, 2011. Association for Computing Machinery.

\bibitem{stealthdb}
A.~Gribov, D.~Vinayagamurthy, and S.~Gorbunov.
\newblock {StealthDB}: a scalable encrypted database with full {SQL} query
  support.
\newblock {\em CoRR}, abs/1711.02279, 2017.

\bibitem{Grubbs2017}
P.~Grubbs, T.~Ristenpart, and V.~Shmatikov.
\newblock {Why Your Encrypted Database Is Not Secure}.
\newblock In {\em Proceedings of the 16th Workshop on Hot Topics in Operating
  Systems}, HotOS '17, pages 162--168, New York, NY, USA, 2017. Association for
  Computing Machinery.

\bibitem{Gupta2019}
P.~Gupta, Y.~Li, S.~Mehrotra, N.~Panwar, S.~Sharma, and S.~Almanee.
\newblock {Obscure: Information-Theoretic Oblivious and Verifiable Aggregation
  Queries}.
\newblock {\em Proceedings of the VLDB Endowment}, 12(9):1030--1043, May 2019.

\bibitem{DBLP:conf/asiaccs/HollandOW22}
W.~L. Holland, O.~Ohrimenko, and A.~Wirth.
\newblock Efficient oblivious permutation via the waksman network.
\newblock In {\em Proc.\ {ASIA} {CCS}}. {ACM}, 2022.

\bibitem{huang2012private}
Y.~Huang, D.~Evans, and J.~Katz.
\newblock {Private set intersection: Are garbled circuits better than custom
  protocols?}

\bibitem{huang2011faster}
Y.~Huang, D.~Evans, J.~Katz, and L.~Malka.
\newblock {Faster secure two-party computation using garbled circuits}.
\newblock In {\em USENIX Security Symposium}, number~1, pages 331--335, 2011.

\bibitem{dualex}
Y.~Huang, J.~Katz, and D.~Evans.
\newblock Quid-{Pro}-{Quo}-tocols: {Strengthening} {Semi}-honest {Protocols}
  with {Dual} {Execution}.
\newblock In {\em 2012 {IEEE} {Symposium} on {Security} and {Privacy}}, pages
  272--284, San Francisco, CA, USA, May 2012. IEEE.

\bibitem{ryoan-osdi16}
T.~Hun, Z.~Zhu, Y.~Xu, S.~Peter, and E.~Witchel.
\newblock Ryoan: A distributed sandbox for untrusted computation on secret
  data.
\newblock In {\em 12th USENIX Symposium on Operating Systems Design and
  Implementation (OSDI 16)}, pages 533--549, Savannah, GA, November 2016.
  USENIX Association.

\bibitem{mp-spdz}
Marcel Keller.
\newblock {MP-SPDZ:} {A} versatile framework for multi-party computation.
\newblock {\em {IACR} Cryptol. ePrint Arch.}, page 521, 2020.

\bibitem{kmac-nist}
J.~Kelsey, S.~Chang, and R.~Perlner.
\newblock {SHA-3 Derived Functions: cSHAKE, KMAC, TupleHash and ParallelHash},
  2016.
\newblock NIST Special Publication 800-185.
  \url{https://doi.org/10.6028/NIST.SP.800-185}.

\bibitem{cryptflow}
Nishant Kumar, Mayank Rathee, Nishanth Chandran, Divya Gupta, Aseem Rastogi,
  and Rahul Sharma.
\newblock Cryptflow: Secure tensorflow inference.
\newblock In {\em 2020 {IEEE} Symposium on Security and Privacy, {SP} 2020, San
  Francisco, CA, USA, May 18-21, 2020}, pages 336--353. {IEEE}, 2020.

\bibitem{oasis-labs1}
OASIS labs.
\newblock {A better way to Contact Trace, {Part I}}.
\newblock
  \url{https://medium.com/oasislabs/a-better-way-to-contact-trace-7beb12889017}.

\bibitem{oasis-labs2}
OASIS labs.
\newblock {A better way to Contact Trace, {Part II}}.
\newblock
  \url{https://medium.com/oasislabs/a-better-way-to-contact-trace-part-ii-code-to-back-it-up-50046c4fa6e1}.

\bibitem{DBLP:conf/isw/LaurWZ11}
S.~Laur, J.~Willemson, and B.~Zhang.
\newblock Round-efficient oblivious database manipulation.
\newblock In {\em Proc.\ {ISC}}. Springer, 2011.

\bibitem{li2023bifrost}
Dingji Li, Zeyu Mi, Chenhui Ji, Yifan Tan, Binyu Zang, Haibing Guan, and Haibo
  Chen.
\newblock Bifrost: Analysis and optimization of network $\{$I/O$\}$ tax in
  confidential virtual machines.
\newblock In {\em 2023 USENIX Annual Technical Conference (USENIX ATC 23)},
  pages 1--15, 2023.

\bibitem{Maheshwari2000}
U.~Maheshwari, R.~Vingralek, and W.~Shapiro.
\newblock {How to Build a Trusted Database System on Untrusted Storage}.
\newblock In {\em Proceedings of the 4th Conference on Symposium on Operating
  System Design \& Implementation - Volume 4}, OSDI 2000, USA, 2000. USENIX
  Association.

\bibitem{Manweiler2009}
J.~Manweiler, R.~Scudellari, and L.~P. Cox.
\newblock {SMILE: Encounter-Based Trust for Mobile Social Services}.
\newblock In {\em Proceedings of the 16th ACM Conference on Computer and
  Communications Security}, CCS '09, pages 246--255, New York, NY, USA, 2009.
  Association for Computing Machinery.

\bibitem{sosp23-arboretum}
Elizabeth Margolin, Karan Newatia, Tao Luo, Edo Roth, and Andreas Haeberlen.
\newblock Arboretum: A planner for large-scale federated analytics with
  differential privacy.
\newblock In {\em Proceedings of the 29th Symposium on Operating Systems
  Principles}, SOSP '23, page 451–465, New York, NY, USA, 2023. Association
  for Computing Machinery.

\bibitem{rote-ussec17}
S.~Matetic, M.~Ahmed, K.~Kostiainen, A.~Dhar, D.~Sommer, A.~Gervais, A.~Juels,
  and S.~Capkun.
\newblock {ROTE}: Rollback protection for trusted execution.
\newblock In {\em 26th USENIX Security Symposium (USENIX Security 17)}, pages
  1289--1306, Vancouver, BC, August 2017. USENIX Association.

\bibitem{remote-attestation-raza}
J{\"a}mes M{\'e}n{\'e}trey, Christian G{\"o}ttel, Anum Khurshid, Marcelo Pasin,
  Pascal Felber, Valerio Schiavoni, and Shahid Raza.
\newblock Attestation mechanisms for trusted execution environments
  demystified.
\newblock In David Eyers and Spyros Voulgaris, editors, {\em Distributed
  Applications and Interoperable Systems}, pages 95--113, Cham, 2022. Springer
  International Publishing.

\bibitem{secureml}
P.~Mohassel and Y.~Zhang.
\newblock {SecureML}: A system for scalable privacy-preserving machine
  learning.
\newblock In {\em 2017 IEEE Symposium on Security and Privacy (SP)}, pages
  19--38, 2017.

\bibitem{Nayak2015}
K.~Nayak, X.~S. Wang, S.~Ioannidis, U.~Weinsberg, N.~Taft, and E.~Shi.
\newblock {GraphSC: Parallel Secure Computation Made Easy}.
\newblock In {\em 2015 IEEE Symposium on Security and Privacy}, pages 377--394,
  2015.

\bibitem{nilsson2020survey}
A.~Nilsson, P.~N. Bideh, and J.~Brorsson.
\newblock {A Survey of Published Attacks on Intel SGX}, 2020.

\bibitem{DBLP:conf/ccs/OhrimenkoCFGKS15}
O.~Ohrimenko, M.~Costa, C.~Fournet, C.~Gkantsidis, M.~Kohlweiss, and D.~Sharma.
\newblock {Observing and Preventing Leakage in MapReduce}.
\newblock In I.~Ray, N.~Li, and C.~Kruegel, editors, {\em Proceedings of the
  22nd {ACM} {SIGSAC} Conference on Computer and Communications Security,
  Denver, CO, USA, October 12-16, 2015}, pages 1570--1581. {ACM}, 2015.

\bibitem{197247}
O.~Ohrimenko, F.~Schuster, C.~Fournet, A.~Mehta, S.~Nowozin, K.~Vaswani, and
  M.~Costa.
\newblock {Oblivious Multi-Party Machine Learning on Trusted Processors}.
\newblock In {\em 25th {USENIX} Security Symposium ({USENIX} Security 16)},
  pages 619--636, Austin, TX, August 2016. {USENIX} Association.

\bibitem{DBLP:conf/pet/OlumofinG10}
F.~G. Olumofin and I.~Goldberg.
\newblock {Privacy-Preserving Queries over Relational Databases}.
\newblock In M.~J. Atallah and N.~J. Hopper, editors, {\em Privacy Enhancing
  Technologies, 10th International Symposium, {PETS} 2010, Berlin, Germany,
  July 21-23, 2010. Proceedings}, volume 6205 of {\em Lecture Notes in Computer
  Science}, pages 75--92. Springer, 2010.

\bibitem{papadimitriou2016}
A.~Papadimitriou, R.~Bhagwan, N.~Chandran, R.~Ramjee, A.~Haeberlen, H.~Singh,
  A.~Modi, and S.~Badrinarayanan.
\newblock Big data analytics over encrypted datasets with seabed.
\newblock In {\em 12th {USENIX} Symposium on Operating Systems Design and
  Implementation ({OSDI}'16)}, pages 587--602, 2016.

\bibitem{DBLP:conf/sp/PappasKVKMCGKB14}
V.~Pappas, F.~Krell, B.~Vo, V.~Kolesnikov, T.~Malkin, S.~G. Choi, W.~George,
  A.~D. Keromytis, and S.~M. Bellovin.
\newblock {Blind Seer: {A} Scalable Private {DBMS}}.
\newblock In {\em 2014 {IEEE} Symposium on Security and Privacy, {SP} 2014,
  Berkeley, CA, USA, May 18-21, 2014}, pages 359--374. {IEEE} Computer Society,
  2014.

\bibitem{parno2011memoir}
B.~Parno, J.~Lorch, J.~Douceur, J.~Mickens, and J.M. McCune.
\newblock Memoir: Practical state continuity for protected modules.
\newblock In {\em Proceedings of the IEEE Symposium on Security and Privacy}.
  IEEE, May 2011.

\bibitem{2pc_is_practical}
B.~Pinkas, T.~Schneider, N.~P. Smart, and S.~C. Williams.
\newblock Secure two-party computation is practical.
\newblock In {\em International conference on the theory and application of
  cryptology and information security}, pages 250--267. Springer, 2009.

\bibitem{Senate21}
R.~Poddar, S.~Kalra, A.~Yanai, R.~Deng, R.~A. Popa, and J.~M. Hellerstein.
\newblock Senate: A {Maliciously-Secure} {MPC} platform for collaborative
  analytics.
\newblock In {\em 30th USENIX Security Symposium (USENIX Security 21)}, pages
  2129--2146. USENIX Association, August 2021.

\bibitem{popa2011}
R.~A. Popa, C.~MS. Redfield, N.~Zeldovich, and H.~Balakrishnan.
\newblock Crypt{DB}: protecting confidentiality with encrypted query
  processing.
\newblock In {\em Proceedings of the Twenty-Third ACM Symposium on Operating
  Systems Principles}, pages 85--100, 2011.

\bibitem{DBLP:conf/sp/PriebeVC18}
C.~Priebe, K.~Vaswani, and M.~Costa.
\newblock {EnclaveDB: {A} Secure Database Using {SGX}}.
\newblock In {\em 2018 {IEEE} Symposium on Security and Privacy, {SP} 2018,
  Proceedings, 21-23 May 2018, San Francisco, California, {USA}}, pages
  264--278. {IEEE} Computer Society, 2018.

\bibitem{ren2023decloak}
Qian Ren, Yue Li, Yingjun Wu, Yuchen Wu, Hong Lei, Lei Wang, and Bangdao Chen.
\newblock Decloak: Enable secure and cheap multi-party transactions on legacy
  blockchains by a minimally trusted tee network, 2023.

\bibitem{sosp21-mycelium}
E.~Roth, K.~Newatia, Y.~Ma, K.~Zhong, S.~Angel, and A.~Haeberlen.
\newblock {\em {Mycelium: Large-Scale Distributed Graph Queries with
  Differential Privacy}}, page 327–343.
\newblock Association for Computing Machinery, New York, NY, USA, 2021.

\bibitem{roth-2019-honeycrisp}
E.~Roth, D.~Noble, B.~{H. Falk}, and A.~Haeberlen.
\newblock {Honeycrisp: Large-scale Differentially Private Aggregation Without a
  Trusted Core}.
\newblock In {\em Proceedings of the 27th ACM Symposium on Operating Systems
  Principles (SOSP'19)}, October 2019.

\bibitem{roth-2020-orchard}
E.~Roth, H.~Zhang, A.~Haeberlen, and B.~C. Pierce.
\newblock {Orchard: {D}ifferentially Private Analytics at Scale}.
\newblock In {\em Proceedings of the 14th USENIX Symposium on Operating Systems
  Design and Implementation (OSDI'20)}, November 2020.

\bibitem{Schuster2015}
F.~Schuster, M.~Costa, C.~Fournet, C.~Gkantsidis, M.~Peinado, G.~Mainar-Ruiz,
  and M.~Russinovich.
\newblock {VC3: Trustworthy Data Analytics in the Cloud Using SGX}.
\newblock In {\em Proceedings of the 2015 IEEE Symposium on Security and
  Privacy}, SP '15, pages 38--54, USA, 2015. IEEE Computer Society.

\bibitem{arm-cca}
ARM~Developer Site.
\newblock {ARM} confidential compute architecture (cca).
\newblock
  \url{https://developer.arm.com/architectures/architecture-security-features/confidential-computing}.

\bibitem{intel-tdx}
Intel~Developer Site.
\newblock {Intel Trust Domain Extensions (Intel TDX)}.
\newblock
  \url{https://www.intel.com/content/www/us/en/developer/articles/technical/intel-trust-domain-extensions.html}.

\bibitem{Tsai2019}
L.~Tsai, R.~De~Viti, M.~Lentz, S.~Saroiu, B.~Bhattacharjee, and P.~Druschel.
\newblock {EnClosure: Group Communication via Encounter Closures}.
\newblock In {\em Proceedings of the 17th Annual International Conference on
  Mobile Systems, Applications, and Services}, MobiSys '19, pages 353--365, New
  York, NY, USA, 2019. Association for Computing Machinery.

\bibitem{sok-fhe}
A.~Viand, P.~Jattke, and A.~Hithnawi.
\newblock {SoK}: Fully homomorphic encryption compilers.
\newblock In {\em 2021 IEEE Symposium on Security and Privacy (SP)}, pages
  1092--1108, 2021.

\bibitem{Vingralek2002}
R.~Vingralek.
\newblock {GnatDb: {A} Small-Footprint, Secure Database System}.
\newblock In {\em Proceedings of 28th International Conference on Very Large
  Data Bases, {VLDB} 2002, Hong Kong, August 20-23, 2002}, pages 884--893.
  Morgan Kaufmann, 2002.

\bibitem{Volgushev2019}
N.~Volgushev, M.~Schwarzkopf, B.~Getchell, M.~Varia, A.~Lapets, and
  A.~Bestavros.
\newblock {Conclave: Secure Multi-Party Computation on Big Data}.
\newblock In {\em Proceedings of the Fourteenth EuroSys Conference 2019},
  EuroSys '19, New York, NY, USA, 2019. Association for Computing Machinery.

\bibitem{DBLP:conf/nsdi/WangYGVZ17}
F.~Wang, C.~Yun, S.~Goldwasser, V.~Vaikuntanathan, and M.~Zaharia.
\newblock {Splinter: Practical Private Queries on Public Data}.
\newblock In A.~Akella and J.~Howell, editors, {\em 14th {USENIX} Symposium on
  Networked Systems Design and Implementation, {NSDI} 2017, Boston, MA, USA,
  March 27-29, 2017}, pages 299--313. {USENIX} Association, 2017.

\bibitem{emp-toolkit}
X.~Wang, A.~J. Malozemoff, and J.~Katz.
\newblock {EMP-toolkit: Efficient MultiParty computation toolkit}.
\newblock \url{https://github.com/emp-toolkit}, 2016.
\newblock Accessed: 2023-12-07.

\bibitem{malicious2pc}
X.~Wang, S.~Ranellucci, and J.~Katz.
\newblock {Authenticated Garbling and Efficient Maliciously Secure Two-Party
  Computation}.
\newblock Cryptology ePrint Archive, Report 2017/030, 2017.
\newblock \url{https://eprint.iacr.org/2017/030}.

\bibitem{agmpc}
X.~Wang, S.~Ranellucci, and J.~Katz.
\newblock Global-scale secure multiparty computation.
\newblock In {\em Proceedings of the 2017 ACM SIGSAC Conference on Computer and
  Communications Security}, CCS '17, page 39–56, New York, NY, USA, 2017.
  Association for Computing Machinery.
\newblock Implementation at \url{https://github.com/emp-toolkit/emp-agmpc}.

\bibitem{wuhybrid}
P.~Wu, J.~Ning, J.~Shen, H.~Wang, and E.~Chang.
\newblock Hybrid trust multi-party computation with trusted execution
  environment.
\newblock In {\em 29th Annual Network and Distributed System Security
  Symposium, {NDSS} 2022, San Diego, California, USA, April 24-28, 2022}. The
  Internet Society, 2022.

\bibitem{Yao1982}
A.~C. Yao.
\newblock {Protocols for secure computations}.
\newblock In {\em 23rd Annual Symposium on Foundations of Computer Science
  (sfcs 1982)}, pages 160--164, 1982.

\bibitem{Zheng2017}
W.~Zheng, A.~Dave, J.~G. Beekman, R.~A. Popa, J.~E. Gonzalez, and I.~Stoica.
\newblock {Opaque: An Oblivious and Encrypted Distributed Analytics Platform}.
\newblock In {\em 14th {USENIX} Symposium on Networked Systems Design and
  Implementation ({NSDI} 17)}, pages 283--298, Boston, MA, March 2017. {USENIX}
  Association.

\end{thebibliography}

\iftr 
\appendix
\section{Community approval process}
\label{sec:community-approval}

As discussed in~\cref{sec:covault_design}, \sys\ relies on a community
approval process to justify trust in the system. Specifically,
\sys\ relies on community approval for two purposes: (i) to justify
the trust in the PSs, which form \sys's root of trust by attesting and
provisioning DPs; (ii) to justify data sources' trust that the query
classes to which they contribute their data do what their
specification says it does. In both cases, data sources can rely on
experts they trust who have attested the PSs and reviewed the
implementation of PSs and DPs.

Any interested community member can inspect the source code of each
system component, verify their measurement hashes, remotely attest the
PSs, and publish a signed statement of their opinion.  To do so, a
witness performs the following operations:
\begin{enumerate}
\item Inspect the source code of the PSs and the build chain used to
  compile all system components, make sure they correctly implement
  the system's specification, and verify that the measurement hash
  recorded in the system configuration matches the source code, and
  remotely attest the PSs accordingly;
\item Obtain the current system configuration from the PSs;
\item Review the specification and source code of each query in a given
class, and verify the measurement hash of the class's DPs recorded in
the configuration;
\item Publish a signed statement of their assessment.
\end{enumerate}
We note that attackers could remove or suppress witness statements,
and false witnesses could post false statements about PS attestations
or PS and DP (query class) reviews. However, this amounts at most to
denial-of-service (DoS) as long as data sources are not distracted by
reviews (positive or negative) from witnesses they do not fully
trust. Furthermore, we note that the state of the PSs could be rolled
back by deleting their sealed states or replacing the sealed states
with an earlier version. However, this also amounts to DoS, as it
would merely have the effect of making inaccessible some recently 
defined query classes and their data.

\section{Further details on scalable analytics queries}
\label{app:details-map-reduce}

The unit of querying in \sys\ is a SQL filter-groupby-aggregate (FGA)
query. In general, a querier may
make a series of \emph{data-dependent} FGA queries,
where query parameters of subsequent queries may depend on the
results of earlier queries. In the following, we first describe how
\sys\ executes basic FGA queries, and then how it handles data-dependent
series of FGA queries.

\paragraph{FGA queries.}
A FGA query has the form:\\

\noindent\mbox{\hspace{1mm}}\textsf{SELECT} \texttt{aggregate}([\textsf{DISTINCT}] \texttt{column1}), \ldots\\
\mbox{\hspace{1mm}}\textsf{FROM} \texttt{T} 
\textsf{WHERE} \texttt{condition} 
\textsf{GROUP BY} \texttt{column2}\\

Here, \texttt{aggregate} is an aggregation operator like \textsf{SUM}
or \textsf{COUNT}. The query can be executed as follows: 
(i) filter (select) from table \texttt{T} the rows that satisfy \texttt{condition}, (ii)
group the selected rows by \texttt{column2}, and (iii) compute the
required \texttt{aggregate} in each group.
A straightforward way of implementing a FGA query is to build a
\emph{single} garbled circuit that takes as input the two shares of
the entire table \texttt{T} and implements steps (i)---(iii). However,
this approach does not take advantage of \emph{core parallelism} to
reduce query latency. Moreover, the size of this circuit grows
super-linearly with the size of \texttt{T} and the circuit may become
too large to fit in the memory available on any one machine. To exploit
core parallelism and to make circuit size manageable, {\sys} relies on
the observation that FGA queries can be implemented using
MapReduce~\cite{dean2004mapreduce}. We first explain how this works
in general (without 2PC) and then explain how {\sys} does this in 2PC.

\paragraph{Background: FGA queries with MapReduce.}
Suppose we have $m$ cores available. The records of table \texttt{T}
are split evenly among the $m$ cores.

Step (i): Each core splits its allocated records into more manageable
\emph{chunks} and applies a \textbf{map} operation to each chunk; this
operation linearly scans the chunk and filters only the records that
satisfy the \textsf{WHERE} \texttt{condition}.

Steps (ii) and (iii): These steps are implemented using a tree-shaped
\textbf{reduce} phase. The 1st stage of this phase uses half the
number of reducers as the map phase. Each \textbf{1st-stage reducer}
consumes the (filtered) records output by two mappers, sorts the
records by the grouping column \texttt{column2}, and then performs a
linear scan to compute an aggregate for each value of
\texttt{column2}. The output is a sorted list of \texttt{column2}
values with corresponding aggregates. Each subsequent stage of reduce
uses half the number of reducers of the previous stage: Every
\textbf{subsequent-stage reducer} merges the sorted lists output by
two previous-stage reducers adding their corresponding aggregates and
producing another sorted list. The last stage, which is a single
reducer, produces a single list of \texttt{column2} values with their
aggregates.

\paragraph{\sys: FGA queries with MapReduce in 2PC.}
\label{para:building-blocks}
\sys\ executes FGA queries by implementing the mappers and reducers
described above as \emph{separate} garbled circuits and evaluating the
circuits in 2PC. Thus, {\sys} inherits scaling with cores from the
MapReduce paradigm.

The \emph{challenge} here is that circuits can implement only a
	limited class of algorithms.  In particular, a circuit is
	data-oblivious---it lacks control flow---and the length of the
	output of a circuit cannot depend on its inputs. However, common
	algorithms for sorting rely on control flow (they branch based on
	the result of integer comparison), and standard algorithms for
	filtering and merging lists produce outputs whose lengths are
	dependent on the values in the input lists. Hence, to implement
	mappers and reducers in circuits, we have to use specific algorithms
	that are data oblivious and pad output to a size that is independent
	of the inputs (this padded output size, denoted $d$ below, is an
	additional parameter of the algorithm; it should be an upper bound
	on the possible output sizes).   In the
following, we explain basic data-oblivious algorithms that
\sys\ relies on, and then explain how it combines them with
	padding when needed to implement mappers and reducers in circuits.

\sys\ relies on the following standard data-oblivious algorithms
implemented in circuits:
\begin{itemize}
	\item A \emph{linear scan} passes once over a list performing some
	operation (e.g., marking) on each element, or computing a running
	total.
	\item \emph{Oblivious sort} on a list. While oblivious sort has a
	theoretical complexity $O(n \log(n))$, all practical algorithms are
	in $O(n(\log(n))^2)$.  \sys\ uses bitonic sort~\cite{Batcher1968}.
	\item \emph{Oblivious sorted merge} merges two sorted lists into a
	longer sorted list. It retains duplicates.  \sys\ uses bitonic
	merge, which is in $O(n\log(n))$~\cite{Batcher1968}.
	\item \emph{Oblivious compact} moves marked records to the end of a
	list, compacting the remaining records at the beginning of the list
	in order. For this, \sys\ uses an $O(n\log(n))$ butterfly circuit
	algorithm~\cite[\S 3]{Goodrich2011}.
\end{itemize}

\sys\ uses these algorithms to implement mappers and reducers as
follows. A \sys\ \textbf{mapper} uses a \emph{linear scan} to only
mark records that do \textit{not} satisfy the \textsf{WHERE} \texttt{condition}
with a discard bit; it does not actually drop them, else the size
of the output list might leak secrets. A
\textbf{1st-stage reducer} \emph{obliviously sorts} the outputs
of two mappers, ordering first by the discard bit, and then by the
\textsf{GROUP BY} criterion, \texttt{column2}. This pushes all records 
marked as discard
by the mappers to the end of the list, and groups the rest in sorted order of
\texttt{column2}. 
Records with the same value of \texttt{column2} belong to the same 
group, so the reducer must consider them just once when computing
the aggregate. To this end,
it performs a \emph{linear scan} to (a)
compute a running aggregate for each unique group, and (b) mark all but one
record in each group to be discarded.
Finally, it does an \emph{oblivious
	compact} to push all records marked as discard to the end of the list.
The unmarked records contain \textit{unique}, sorted values of 
\texttt{column2} paired with
corresponding aggregates. This output is truncated or padded to a
fixed length $d$, which, as mentioned earlier, is an additional query parameter that should
be an upper bound on the possible number of unique
groups in \texttt{column2}. \textbf{Subsequent-stage reducers} are similar to the
1st-stage reducers, except that they receive two already sorted lists as
input, so they use \emph{oblivious sorted merge} 
instead of the more expensive oblivious sort.

The theoretical complexities of a mapper, 1st-stage reducer, and
subsequent-stage reducer are $O(c)$ where $c$ is the chunk size, $O(c
(\log(c))^2)$ and $O(d \log(d))$, respectively. The chunk size $c$ has
a non-trivial effect on query latency: Larger chunks result in more
expensive mappers and 1st-stage reducers, but fewer total number of
mappers and reducers. In practice, we determine the chunk size
empirically to minimize query latency; our experiments use $c = 10k$.

\paragraph{Data-dependent FGA queries.}
If a query accesses a table via a private index or if the data
rows accessed by a FGA query \emph{depend} on the output of an earlier
query, information about database content may leak via the DB access
pattern. To avoid such leaks, data-dependent FGA queries use
previously shuffled tables (\cref{para:random-shuffling}). Also, the
number of records read must not depend on previous query
	results; to this end, \sys\ adds dummy record reads.

\section{Ingress processing with public attributes}
\label{subsec:ingress}
\label{sec:ingress}

Many analytics applications like epidemic or financial transaction analytics 
rely on \mbox{(spatio-)}temporal data.
In these applications, it is likely that the queries analyse data in a given 
\mbox{(space-)}time region. If the \mbox{(spatio-)}time attribute reveals no
sensitive information, we can exploit data locality: grouping and storing 
together data according to \textit{public} \mbox{(spatio-)}temporal information 
speeds up the queries fetching a whole group for processing. 
To exploit data locality, \sys\ creates a materialized view(s) indexed 
by a public attribute(s), e.g., coarse-grained (space-)time information.
Thus, \sys\ is a read-only platform but supports DB appends
in order of public attributes: \sys's ingress processing pipeline allows 
incremental data upload from several data sources and produces
materialized views securely and efficiently. In this section, 
we explain \sys's ingress processing, which is implemented in 2PC by
two dedicated DPs called the two Ingress Processors (IPs).

\paragraph{Batch append.} Data sources upload new data in batches,
which are padded to obfuscate the exact amount of data being uploaded.  
Data sources may pre-partition data according to public 
attributes, or locally pre-join or pre-select data prior to upload.
Once a batch is uploaded, the batch becomes immutable and queryable.
Before appending that batch to \sys's DB, the IPs may perform pre-processing
operations in 2PC; for instance, the IPs can buffer different batches and run 
group or sort operations (in 2PC) in order to produce or append 
to materialized views. The actual operations and materialized views are
application-specific;
in \cref{sec:epidemics}, 
we discuss the example scenario 
of epidemics analytics. Note that data sources can contribute data to 
one or more query classes. \sys's IP pairs are query-class specific.

\paragraph{Ingress security and integrity.}
Data sources upload batches to IPs using session keys, thus are not
identifiable by the IPs. The batches are padded, and the padding is
``revealed'' only within 2PC: no single IP party can determine the
actual amount of data in any batch. For added security, data sources
can split a single batch into multiple batches randomly and upload
them with different session keys. They may also use VPN/Tor to
obfuscate their Internet addresses during upload.

\paragraph{Storing data in garbled form.}
As discussed in \cref{subsec:optimizations}, computing MACs in 2PC 
is expensive. So, we eliminate the MACs between IPs and DPs as well,
by storing values in \sys's DB directly in garbled form.  
Doing so significantly increases the size of stored data: garbling codes 
each bit in a 128-bit space. However, this choice eliminates the need 
for the IPs to compute MACs in 2PC, and for the DPs to verify them.
(However, recall that IPs need to \textit{verify} in 2PC the MACs added
to their shares by the data sources, and this is unavoidable).
This optimization of storing garbled values can be generalized to different 
query classes by using a separate secret to garble circuits for each query class: 
an IP pair garbles every datum once for the query class it manages,
and each DP pair gets access to the secret of its class only.

\section{Further details on the epidemic scenario}
\label{sec:epidemics}

\subsection{Database}
\label{subsec:epidemics_db}
Epidemic analytics operates on a time series of individual locations and 
pairwise contacts among data sources' devices.  Such data may
originate, for instance, from smartphone apps that record GPS
coordinates and pairwise Bluetooth encounters \cite{Manweiler2009, Tsai2019}, 
or from a combination of personal devices and Bluetooth beacons installed in strategic
locations~\cite{appleGoogleContactTracing,pancast}.
We assume that data sources consent to consider as \emph{public} coarse-grained spatio-temporal information related to the data they upload, as well as the inputs and the results of statistical epidemiological queries.

Here, we describe the records and materialized views
used in our epidemic analytics scenario.  Each view consists of
a variable number of records of the form described in \cref{fig:tab:appendix}.
\begin{figure}[h]%
	\setlength\tabcolsep{1.5pt}
	\begin{tabular}{@{}c | c | c | c | c | c | c | c | c|@{}}
		\cline{2-9}
		\underline{View} & \multicolumn{8}{c|}{Record fields}\\
		\hhline{~========}
		$T_E$ & \textit{s.-t.-region} & eid & did1 & did2 & t & dur & aux & validity \\
		\hhline{~========}
		$T_S$ & \textit{s.-t.-region} & did1 & aux & \multicolumn{5}{c|}{period of contagion} \\
		\hhline{~========}
		$T_P$ & \textit{epoch} & did1, t & did2 & dur & prev & next & aux & validity \\
		\cline{2-9}
	\end{tabular}
	\setlength\tabcolsep{6pt} %
	\caption{Materialized views used for epidemic analytics. The first column is a public index. (\textit{s.t.} = \textit{space-time}).}
	\label{fig:tab:appendix}
\end{figure}
There is a separate view for every query class. 
The views include encounters identified by an encounter id ($eid$).
If two data sources share an encounter, they report the same $eid$ for that 
encounter, along with their own device id $did$.
Only mutually confirmed encounters---encounters reported by both peers with consistent
information on the encounter---are used for analytics. 
For this purpose, the IPs perform a join of the individual
encounter reports, and add a \textit{validity} attribute 
to potentially mark an encounter as confirmed
(see \cref{subsec:epidemics_ip}). Additionally, the records
may report $t$, a fine-grained temporal information indicating 
when an encounter begins, $dur$, the encounter duration, 
and $aux$, any additional information 
not currently used 
(e.g., fine-grained location or signal strength).
Finally, note that the records in $T_P$ are organized by device
trajectories: these records are indexed by $did1$ and $t$, 
and each entry contains the indexes of the previous ($prev$) and next
($next$) encounters of $did1$. This view is used by queries that
search trajectories in an encounter graph 
(q3 in \cref{fig:queries-eval} and q4 in \cref{fig:queries-eval-appendix}).

\paragraph{Space-time views.} These materialized views speed up
queries that have no data-dependent flow control. The views are
grouped in space-time regions, indexed by a \textit{public}, 
coarse-grained space-time index.
Space-time views contain either encounter records ($T_E$) 
or information related to sick people ($T_S$).
When a data source uploads a batch of encounters, it pre-selects
the coarse-grained space-time index that batch maps to.
The IPs collect batches from different data sources, and
generate space-time views grouping data in the same space-time region. 
Using space-time views accelerates queries whose input parameters
specify an arbitrary space-time region in input: prior to run 
the query in 2PC, the DPs can locally 
fetch all records whose coarse-grained index falls
within the space-time region in input.

The resolution of the space-time regions depends on publicly known
population density information and mobility patterns at a given
location, day of the week, and time of the day, and is chosen so as to
achieve an approximately even number of records per region. 
Each region is padded to its nominal size to hide the actual number of records
it contains. (Note that many shards corresponding to locations at sea,
in the wilderness, or night time have a predicted size of zero and
therefore do not exist in the views.)

\paragraph{Shuffled encounter views.}
As discussed in \cref{subsec:optimizations}, space-time views 
can be used only if the queries do not require any secret-dependent
data access. Otherwise, we use shuffled encounter views, which support our 
ODR scheme (\cref{para:random-shuffling}). These views allow running a sequence of FGA queries on $T_P$;
however, \sys\ cannot reveal the \textit{sequence} of space-time regions
analysed, as such sequence might reveal data sources' movements in time.
Thus, the only public attribute is a conservative coarse-grained \textit{time} 
information, which we call \textit{epoch}. The primary key is encrypted 
and \emph{the records within each epoch are randomly shuffled}. 
The mapping between a record, its encrypted key, 
and its position within the view can be reconstructed only in 2PC. 
The views are produced by the IPs and the records in an epoch 
are re-shuffled by the DPs after each use in a query.

\paragraph{Risk encounter view.} This materialized
view contains encounters that involve a diagnosed patient and
took place during the patient's contagion period. Conceptually, it is
the result of a join of $T_E$ and $T_S$, computed incrementally during
ingress processing and in cooperation with data sources. The
view supports efficient queries that focus on potential and actual
infections.

\subsection{Ingress Processing}
\label{subsec:epidemics_ip}
Next, we discuss \sys's ingress processing (\cref{subsec:ingress}) specific to epidemic analytics.
As mentioned in \cref{subsec:epidemics_db}, the IPs check whether
an encounter is confirmed, and potentially mark it as valid. Here, we 
give more details about this computation. Before uploading data to a
view, the data source partitions its encounters into
\textit{per-space-time-region} batches,
sorts each batch by $eid$,  and \emph{randomly pads} each batch to hide the 
actual number of encounters in the batch. 
Then, the data source secret-shares and MACs the batches following the protocol of \cref{subsec:macs}, and uploads the batches to the two IPs using session keys.

Throughout  a day,  each IP  pair  receives batches  from data sources and
stores them locally  (outside 2PC).  Periodically, e.g., once per day,
each IP pair runs a 2PC, which:

\begin{enumerate}
	\item consumes all per-device batches for the space-time region
	\item reconstructs the batches from the shares
	\item verifies the per-batch MACs 
	\item merges the (sorted) batches into a space-time region buffer using
	oblivious sorted merge~\cite{huang2012private}
	\item and truncates or pads the sorted list to the expected space-time region
	size.
\end{enumerate}

\if
(1) consumes all per-device batches for the space-time region; 
(2) reconstructs the batches from the shares;
(3) verifies the per-batch MACs;
(4) merges the (sorted) batches into a space-time region buffer using
oblivious sorted merge~\cite{huang2012private} (effectively performing a join of the individual encounters leveraging the public space-time attribute and keeping the order by \textit{eid});
(5) and truncates or pads the sorted list to the expected space-time region
size.
\fi
The sorting puts padding uploaded by the data sources at the end of the
buffer, so truncating the buffer deletes padding first.
Next, the IPs confirm encounters within 2PC.  For each
$eid$, they check if both peers have uploaded the encounter
with consistent locations and times, and set the validity bit of each
encounter accordingly.  This requires a linear scan of the space-time region.
Finally, each IP in the pair
appends the space-time region (in garbled form, see \cref{subsec:ingress}) 
to the appropriate tables and materialized views, 
performing random shuffling when needed.

\subsection{Evaluation of Ingress Processing}
\label{sec:eval-ingress}

Ingress processing converts batches of
records uploaded by data sources to tables/views used by
queries. Again, we report the costs of only one of the two circuits
of DualEx because ingress does not perform DualEx's equality check
and runs the two circuits completely independently in parallel
(the tables/views created by the IPs are stored in their
in-circuit, garbled form).

\begin{figure}[h]
	\centering
	\includegraphics[width=0.45\textwidth]{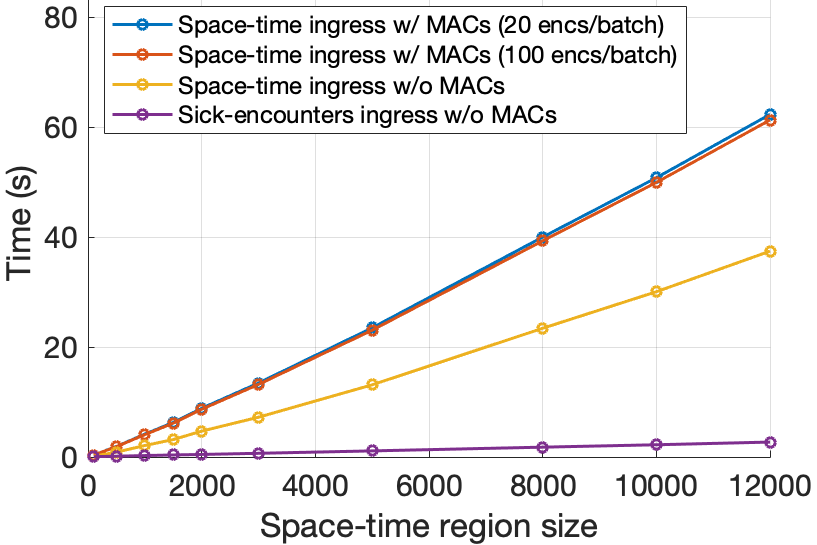}
	\caption{Cost of ingress processing on a single space-time region as a function of the region size.}
	\label{fig:eval-ingress-runtime}
\end{figure}

In \cref{fig:eval-ingress-runtime}, the two highest, nearly overlapping lines show the
average time for ingress processing to generate a single space-time
region in $T_E$, as a function of the region size (x-axis) and
the upload batch size (20 encounters/batch or 100 encounters/batch) on a pair of
cores (i.e., a pair of IPs). The IPs performs all the (1--5) steps in
\cref{subsec:epidemics_ip}. The costs are slightly super-linear
because the IPs sort each batch and merge the sorted batches.
A significant part of this cost (between 40\% and 65\% depending on
the region size) is the verification of per-batch MACs (\emph{cf.}
the 3rd line, which shows the cost without it). This
high cost of MAC verification in 2PC is why we store tables/views in
garbled form and avoid verifying MACs again in query processing.
Note that the cost of ingress processing without MAC verification
depends on the region size but not the batch size; in fact,
the batch size impacts only the number of batches and the time to upload 
each batch to the IPs via a different, secure connection.
This cost is negligible compared to the costs of 2PC operations.
The lowest line of \cref{fig:eval-ingress-runtime} is the cost of
ingress operations to populate the table of data sources diagnosed sick
($T_S$). This cost is much lower than that of generating a
space-time region since $T_S$ is not sorted and each record in $T_S$
has fewer bits, which reduces MAC verification time.

\paragraph{Scaling.}
The relevant performance metric for ingress processing is throughput:
We want to determine how many core pairs we need to keep up with the
data rate generated by a given administrative entity (e.g., country, city). 
Thus, we run a 5h experiment where $m$ core pairs
generate space-time regions of 100 records each from uploads in
batches of size 20. We measured the throughput of ingress processing
(in encounters processed per hour), varying $m$ from 1 to 8.

As expected, the results show perfect linear scaling with the number
of available core pairs, from $584k$ to $4.75M$ encounters processed
per hour for 1 and 8 core pairs, respectively. From this, we can
extrapolate the number of core pairs needed to keep up with all encounters generated
in a given world region. For example, conservatively assuming 200
encounters/person/day in urban areas and 100 encounters/person/day
in rural areas, we estimate that, every 24h: 
\begin{itemize}
\item a mid-sized country with population 80M would generate 11.85B encounters;
\item a big metropolitan area with population 8M would generate 1.6B encounters;
\item a urban city with population 3M would generate 600M encounters;
\item a urban town with population 200k would generate 40M encounters.
\end{itemize}

Extrapolating from the linear scaling above, we estimate that ingress processing needs a total of 1660, 226, 86, and 6 core pairs, respectively. According to these estimates, the ingress processing of a whole country is well within the means of a small-scale data center, while that of a town would require a single pair of machines.

\subsection{Evaluation of data-dependent queries}
\label{sec:eval-additional-query}

\paragraph{Details of the implementation of query q3.} 
This query asks for the number of devices in \texttt{B} that directly
encountered a device in \texttt{A} in the time interval
\texttt{[start,end]}. To implement q3, we traverse the trajectories of
devices in \texttt{B} backwards in time, by iterating over
\emph{epochs}, from the epoch that contains \texttt{end} to the epoch
that contains \texttt{start}. This iteration over epochs is done
outside 2PC since epochs are public. The data for each epoch is
successively loaded into a temporary table, called \texttt{TT} in the
query, and this table is then processed via a query in 2PC. The 2PC
query traverses the trajectory of each device in \texttt{B} from the
device's last encounter in the epoch to their earliest encounter, by
following encounter pointers. For this traversal, 2PC makes
data-dependent queries to the epoch's records in \texttt{TT} but since
the records in each epoch are randomly shuffled per our ODR
scheme, this does not reveal any secrets. To ensure that no secrets
leak via the \emph{number} of lookups in an epoch, we use a fixed,
conservative number of lookups per device (in \texttt{B}) per epoch,
fetching dummies when needed (the size of \texttt{B} is part of the
query, hence, public). Every other device encountered by a device (in
\texttt{B}) during the traversal is checked for membership in
\texttt{A}, via a Bloom filter initialized with devices in \texttt{A}
(the Bloom filter is implemented in 2PC). 
Whenever the membership test succeeds, we mark
the device in \texttt{B} as having encountered someone in
\texttt{A}. At the end, we simply count the number of marked devices.

In our experiments, it took (in 2PC) a constant 52ms to initialize the
Bloom filter for $A = 10$, 27ms (std.\ dev.\ 2ms) to fetch an
encounter from a shuffled view, 43ms (std.\ dev.\ 5ms) to check
membership in the Bloom filter, and 10ms (std.\ dev.\ 2ms) to check
that an encounter's time is between \texttt{start} and
\texttt{end}. Assuming $B = 10$, a total of $e = 336$ epochs
(corresponding to a period of 14 days with 1h epochs), and at
most $n = 30$ encounters/person/h, the total query latency comes to
$52 + (27 + 10 + 43)\cdot e \cdot n \cdot |\mbox{\texttt{B}}|$ = 2.24
hours.

Note that the ODR scheme improves the latency of q3
significantly by enabling secure data-dependent accesses. Without
ODR, we would have to scan \emph{all} encounters in the time
interval \texttt{[start,end]}. Assuming, as before, that 11.85B
encounters are generated in a country every day, and the interval
\texttt{[start,end]} is 14 days long, this data-\emph{in}dependent
approach would have to process nearly $11.85$B$\cdot 14 = 165.9$B
encounter records in 2PC. By contrast, the data-dependent approach
above processes a total of $336 \cdot 30 \cdot 10 = 100,800$ records
in 2PC, which is over 6 orders of magnitude fewer records fetched and
processed in 2PC.

\paragraph{An additional query: q4.} 
\begin{figure}[h]
  \centering
  \begin{tabular}{|l|}
    \hline
    \begin{tabular}{@{}l@{~}p{0.4\textwidth}@{}}
      \textbf{(q4)} & Count \#devices in \texttt{B} that encountered a
      device which \emph{previously} encountered a device in
      \texttt{A}, with both encounters in the time interval
      \texttt{[start,end]}
    \end{tabular}\\
    \begin{tabular}{@{}l@{}}
     \small \texttt{WITH TT AS} \\ 
     \small \texttt{~(SELECT * FROM $T_P$} \\
     \small \texttt{~~WHERE start$\,<\,$epoch$\,<\,$end)} \\ 
      \small \texttt{SELECT COUNT(DISTINCT(T2.did2))}\\
      \small \texttt{FROM (TT AS T1) JOIN (TT AS T2)}\\
      \small \texttt{~ON T1.did2 == T2.did1} \\
      \small \texttt{WHERE T1.did1$\,\in\,$A AND T2.did2$\,\in\,$B AND} \\
      \small \texttt{~start$\,<\,$T1.time$\,<\,$T2.time$\,<\,$end} 
    \end{tabular}\\
    \hline
  \end{tabular}
    \caption{Query q4. The selection on the
  	public attribute \texttt{epoch} is done outside 2PC using
  	$T_P$'s public index.}
    \label{fig:queries-eval-appendix}
\end{figure}

Here, we describe another data dependent query q4 (\cref{fig:queries-eval-appendix}), 
which we also evaluate for latency. This query asks how many devices
from set \texttt{B} encountered a device from the set \texttt{A}
indirectly through an intermediate device (all within a given time
interval). This query can be used to determine if two outbreaks of an
epidemic (corresponding to the sets \texttt{A} and \texttt{B}) are
indirectly connected over 1-hop.

We implemented and evaluated the latency of q4 in the same setting as
q3 (\cref{sec:eval-querylatency}). Unlike q3, which makes one
traversal over the trajectories of devices in \texttt{B}, q4 uses \textit{two}
traversals. One traversal collects all encounters of devices in
\texttt{A} moving forwards in time; the second traversal collects all
encounters of devices in \texttt{B} moving backwards in time. We then
sort the collected encounters of $\mbox{\texttt{A}} \cup
\mbox{\texttt{B}}$ ascending by time, and make a linear pass over them
maintaining two Bloom filters, one of all devices that have
encountered a device in \texttt{A} and the other of devices in
\texttt{B} who have encountered a device \emph{already} in the first
Bloom filter. The result of the query is the size of the second filter
at the end.

In our experiments, Bloom filter operations take 709ms per encounter
and there is a one-time setup cost of 129ms. Sorting the encounter
list of $\mbox{\texttt{A}} \cup \mbox{\texttt{B}}$ has negligible cost
in comparison. Assuming $e = 336$ epochs, $n= 30$ encounters/person/h,
$|\mbox{\texttt{A}}| = 10$ and $|\mbox{\texttt{B}}| = 10$ 
(the same values that we assumed for q3), this encounter list has
length $m = e \cdot n \cdot (|\mbox{\texttt{A}}| +
|\mbox{\texttt{B}}|) = 201,600$. The total query latency is $129 + 27
\cdot e \cdot n \cdot (|\mbox{\texttt{A}}| + |\mbox{\texttt{B}}|) +
(10 + 709) \cdot m$ = 1.74 days. In this case, the savings over a
naive approach of scanning all encounter records are even more
pronounced than those in the case of q3, since the
naive approach would have to first sort 165.9B records in~2PC and then
scan them maintaining the same two Bloom filters.
\section{Estimation of end-to-end query latency}
\label{sec:mr-extrapolation}

\newcommand{\mmr}{\textit{mmr}}
\newcommand{\mr}{\textit{mr}}
\newcommand{\machine}{\textit{mach}}
\newcommand{\cmr}{\textit{cmr}}
\newcommand{\fr}{\textit{fr}}
  
\newcommand{\dmmr}{D_{\mmr}}
\newcommand{\dmr}{D_{\mr}}
\newcommand{\dmachine}{D_{\machine}}
\newcommand{\dcmr}{D_{\cmr}}
\newcommand{\dfr}{D_{fr}}
\newcommand{\expect}{\mathbb{E}}
\newcommand{\cost}{\mbox{cost}}

We describe how we estimate the end-to-end latency of a query executed
in 2PC using our MapReduce approach, as a function of the number of
available machine pairs. By reversing the estimation function, we can
also easily determine the number of machines needed to attain a given
query latency.

We start with two basic mathematical facts that we need for our
estimates.

\begin{lem}\label{lemma:max:normal}
Given $n$ i.i.d.\ random variables $X_1, \ldots, X_n$ with normal
distributions of mean $\mu$ and standard deviation $\sigma$, let $X =
\max(X_1,\ldots,X_n)$. Then, $\expect[X] \leq \mu + \sigma \sqrt{2
  \ln(n)}$.
\end{lem}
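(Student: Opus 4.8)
The plan is to reduce to the standard-normal case and then run the classical moment-generating-function (Chernoff-style) argument combined with Jensen's inequality. First I would write $X_i = \mu + \sigma Z_i$, where $Z_1,\dots,Z_n$ are i.i.d.\ standard normals; then $X = \mu + \sigma\max_i Z_i$, so by linearity of expectation it suffices to show $\expect[\max_i Z_i] \le \sqrt{2\ln n}$ and then multiply back by $\sigma$ and add $\mu$. The case $n=1$ is immediate since then $X=X_1$ has mean exactly $\mu$, so I would assume $n\ge 2$ (ensuring $\ln n > 0$, which matters for the optimization step below).

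Next, fix an arbitrary parameter $t>0$. Since $x\mapsto e^{tx}$ is convex, Jensen's inequality gives $e^{t\,\expect[\max_i Z_i]} \le \expect\!\big[e^{t\max_i Z_i}\big] = \expect\!\big[\max_i e^{tZ_i}\big]$. Bounding the maximum by the sum and invoking the standard-normal MGF identity $\expect[e^{tZ}] = e^{t^2/2}$ yields $e^{t\,\expect[\max_i Z_i]} \le \sum_{i=1}^n \expect[e^{tZ_i}] = n\,e^{t^2/2}$. Taking logarithms, $\expect[\max_i Z_i] \le \frac{\ln n}{t} + \frac{t}{2}$ for every $t>0$.

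The last step is to minimize the right-hand side over $t>0$. Elementary calculus (or AM--GM on $\frac{\ln n}{t}$ and $\frac{t}{2}$) shows the minimum is attained at $t=\sqrt{2\ln n}$, where the bound equals $\sqrt{2\ln n}$. Substituting back gives $\expect[\max_i Z_i]\le\sqrt{2\ln n}$, hence $\expect[X] = \mu + \sigma\,\expect[\max_i Z_i] \le \mu + \sigma\sqrt{2\ln n}$, as claimed.

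I do not expect any genuine obstacle here: this is a textbook sub-Gaussian maximal inequality, and independence is not even needed (the sum bound on the max works for arbitrary dependence). The only points requiring mild care are keeping the normalization $X_i = \mu+\sigma Z_i$ straight and checking that $\ln n>0$ so that the optimal $t$ is positive, which is why I separate out the trivial $n=1$ case at the start.
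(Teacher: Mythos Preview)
Your proposal is correct and follows essentially the same Chernoff/Jensen argument as the paper: apply Jensen to $e^{tX}$, bound the maximum by the sum, plug in the Gaussian MGF, take logs, and optimize over $t>0$. The only cosmetic differences are that you first standardize to $Z_i$'s (the paper works directly with the $X_i$'s and their MGF $e^{t\mu+\sigma^2 t^2/2}$) and that you explicitly separate out the $n=1$ case, which the paper glosses over.
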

\begin{proof}
  This is a folklore result. We provide a simple proof here. Let $t > 0$
  be a parameter. Since $e^y$ is a convex function of $y$, by Jensen's
  inequality, $e^{t\expect[X]} \leq \expect[e^{tX}] = \expect[\max_i
    e^{tX_i}] \leq \expect[\sum_{i=1}^n e^{tX_i}] = \sum_{i=1}^n
  \expect[e^{tX_i}]$. From the moment generating function of the
  normal distribution, $\expect[e^{tX_i}] = t\mu + \frac{1}{2}\sigma^2
  t^2$. Hence, $e^{t\expect[X]} \leq n e^{(t\mu + \frac{1}{2}\sigma^2
    t^2)}$, and $\expect[X] \leq \frac{\ln(n)}{t} + \mu + \frac{t
    \sigma^2}{2}$. The function on the right is minimized for $t =
  \frac{\sqrt{2 \ln(n)}}{\sigma}$, and its minimum value is $\mu +
  \sigma \sqrt{2 \ln(n)}$, as required.
\end{proof}

In the sequel, we let $\expect[D; n]$ denote the expected value of
the maximum of $n$ i.i.d.\ random variables, each drawn from the
normal distribution $D$. Lemma~\ref{lemma:max:normal} says that
$$\expect[D; n] \leq \mu + \sigma \sqrt{2 \ln(n)}$$ where $\mu$ and
$\sigma$ are the mean and standard deviation of $D$.

\begin{lem}
For any natural number $K \geq 0$, $$\sqrt{K} + \sqrt{K-1} + \ldots +
\sqrt{1} \leq \frac{2}{3}((K+1)^{3/2} - 1)$$
\end{lem}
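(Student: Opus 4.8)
The plan is to prove this by comparing the sum with an integral of the increasing function $x \mapsto \sqrt{x}$. Since $\sqrt{\cdot}$ is nondecreasing on $[0,\infty)$, for each integer $j$ with $1 \leq j \leq K$ and every $x \in [j, j+1]$ we have $\sqrt{j} \leq \sqrt{x}$, hence $\sqrt{j} \leq \int_j^{j+1} \sqrt{x}\,dx$. Summing this over $j = 1, \ldots, K$ and noting that the cells $[j,j+1]$ tile $[1, K+1]$ yields $\sum_{j=1}^{K} \sqrt{j} \leq \int_1^{K+1} \sqrt{x}\,dx$.

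The second step is just to evaluate the integral: $\int_1^{K+1}\sqrt{x}\,dx = \big[\frac{2}{3}x^{3/2}\big]_1^{K+1} = \frac{2}{3}\big((K+1)^{3/2} - 1\big)$, which is exactly the claimed bound. The degenerate case $K = 0$ is handled separately (or simply falls out, reading the left side as the empty sum): both sides equal $0$.

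An alternative route is induction on $K$: the base case $K = 0$ is the trivial identity $0 \leq 0$, and in the inductive step, after applying the hypothesis to $\sum_{j=1}^{K-1}\sqrt j$, it remains to check $\sqrt{K} \leq \frac{2}{3}\big((K+1)^{3/2} - K^{3/2}\big)$; this holds because $(K+1)^{3/2} - K^{3/2} = \int_K^{K+1}\frac{3}{2}\sqrt{x}\,dx \geq \frac{3}{2}\sqrt{K}$, again by monotonicity of $\sqrt{\cdot}$.

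There is no substantive obstacle here; the only points requiring care are getting the direction of the monotonicity estimate right (the value of an increasing function at the left endpoint of a unit cell underestimates its integral over that cell) and disposing of the $K = 0$ case. I would present the integral-comparison argument, since it is the shortest and makes the constant $\frac{2}{3}$ transparent.
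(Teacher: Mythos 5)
Your integral-comparison argument is exactly the paper's proof: write each $\sqrt{j}$ as $\int_j^{j+1}\sqrt{j}\,dx$, bound it by $\int_j^{j+1}\sqrt{x}\,dx$ using monotonicity, and sum the integrals over $[1,K+1]$ to obtain $\frac{2}{3}((K+1)^{3/2}-1)$. The proposal is correct and takes essentially the same approach, with the induction sketch as a harmless extra.
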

\begin{proof}
  \[\begin{array}{@{}lll@{}}
  &      & \sqrt{K} + \sqrt{K-1} + \ldots + \sqrt{1} \\
  & =    & \bigint_{K}^{K+1} \sqrt{K} \;\mbox{d}x + \ldots + \bigint_{1}^{2} \sqrt{1} \;\mbox{d}x \\
  & \leq & \bigint_{K}^{K+1} \sqrt{x} \;\mbox{d}x + \ldots + \bigint_{1}^{2} \sqrt{x} \;\mbox{d}x \\
  & = & \bigint_{1}^{K+1} \sqrt{x} \;\mbox{d}x \\
  & = & \frac{2}{3}((K+1)^{3/2} - 1)
  \end{array}\]
\end{proof}

Now we explain our estimation of end-to-end latency. Suppose we have
$M$ available machines pairs and each machine has $2C$ available
cores. This gives us a total of $C$ \emph{units of DualEx execution}
per machine pair and a total of $MC$ units of DualEx execution. In the
sequel, we use the term \emph{unit} to mean ``a unit of DualEx
execution''.

For a given query, let the queried table have $N$ records. We divide
the records of the table into \emph{chunks} of $t$ records each,
and then divide the chunks evenly among the $MC$ units. So, each unit
starts with $N_c$ chunks, where
$$N_c = \frac{N}{tMC}$$
The best value of $t$ is determined empirically: If $t$ is too small,
each mapping and initial reducing circuit does very little work (and
setup costs dominate). If $t$ is too high, the state of all parallel
cores on a machine may not fit in memory. For tables in our
evaluation, we usually pick $t$ to be 10,000 records.

The query actually executes in 4 fine-grained stages. All but
the last stage do \emph{not} perform the final equality check of
DualEx, but each stage does run the two symmetric DualEx 2PC computations.
\begin{enumerate}
\item (Unit-level) Each unit (2+2 cores on a machine pair) maps two
  chunks out of its $N_c$ chunks and then reduces them to an
  intermediate result. Each unit then alternates mapping a new chunk
  and reducing the mapped chunk with the intermediate result
  previously available on the unit, producing a new intermediate
  result. This second step is repeated $N_c - 2$ times on each unit
  till all chunks are consumed and one intermediate result is obtained
  on each unit. All $MC$ units do these computations in parallel. Let
  $\dmmr$ be the latency distribution of the first two maps followed
  by reduce on a single unit, and let $\dmr$ be the latency
  distribution of each of one subsequent map+reduce on a single unit.
\item (Machine-pair-level) All $C$ units on a single machine pair
  reduce their results to a single result using a balanced reduction
  tree. All $M$ machine pairs do this independently in parallel. Let
  $\dmachine$ be the latency distribution of this on a single machine.
\item (Cross-machine) All intermediate results across all machine
  pairs are reduced using a balanced reduce tree to two final results,
  which are on a single machine. Only one unit per machine pair is
  used. At the end, one machine ends up with two results. Let $\dcmr$
  be the latency distribution of the following cross-machine
  computation: Two pairs of machines reduce in parallel and one pair
  sends its result to the other pair at the end. We call this a
  cross-machine reduction unit (\emph{cmru}).
\item (Final reduce) The machine pair getting the last two
  intermediate results performs one final reduce with the DualEx
  equality check at the end. Let the latency distribution of this
  final step be $\dfr$.
\end{enumerate}

We empirically estimate $\dmmr$, $\dmr$, $\dmachine$, $\dcmr$, and
$\dfr$ by observing the means and standard deviations of the
corresponding latencies while running a query on a small number of
machines. Let $\mu_{\mmr}$ and $\sigma_{\mmr}$ denote the mean and
standard deviations of $\dmmr$, and similarly for the remaining four
distributions. For our end-to-end latency estimate we model each of
these distributions as a \emph{normal} distribution with the measured
mean and standard deviation.

There is no synchronization at the end of each stage in our
implementation, e.g., if all units on a machine pair have finished
stage 1, that machine pair goes ahead with stage 2 without waiting for
other machine pairs to finish stage 1. However, we estimate the
end-to-end query latency \emph{conservatively} by instead calculating
the latency of a hypothetical execution model where there is a full,
instantaneous synchronization at the end of each of the first three
stages. This latter latency is definitely a conservative upper bound
on the actual latency, and is much easier to calculate.

Let $\cost_1$--$\cost_4$ be the expected costs of the four stages
above in the conservative (synchronizing) model. We upper-bound each
of these separately.

\paragraph{Estimating $\cost_1$}
The latency of stage 1 on each unit has a normal distribution given by
$D_1 = \dmmr + (N_c - 2) \dmr$. From standard properties of normal
distributions, $D_1$ has mean and standard deviation $\mu_1$ and
$\sigma_1$ where
\[
\begin{array}{lll}
  \mu_1    & = & \mu_{\mmr} + (N_c - 2) \mu_{\mr} \\
  \sigma_1 & = & \sqrt{\sigma_{\mmr}^2 + (N_c - 2) \sigma_{\mr}^2}
\end{array}
\]
Then, since stage 1 consists of $MC$ parallel units, we have:
\[ \cost_1 = \expect[D_1 ; MC] \leq \mu_1 + \sigma_1 \sqrt{2 \ln(MC)} \]

\paragraph{Estimating $\cost_2$}
In stage 2, $M$ machines operate in parallel, each with latency
$\dmachine$. Thus, we have:
\[ \cost_2 = \expect[\dmachine; M] \leq \mu_{\machine} + \sigma_{\machine} \sqrt{2 \ln(M)} \]

\paragraph{Estimating $\cost_3$}
Stage 3 is a tree-shaped cross-machine reduce. The number of levels in this tree is $K$ where:
\[ K = \lceil \log_2(M) \rceil \]
At the first level, we have $\lceil M/4 \rceil$ parallel cmrus. At the
second level, we have $\lceil M/8 \rceil$ parallel cmrus, and so on,
till we have only 1 cmru. Hence, we get:
\[
\begin{array}{lll}
  \cost_3 & \leq & \displaystyle\sum_{i = 2}^{K} \expect[\dcmr; \lceil M/(2^i) \rceil]  \\
  & \leq & \displaystyle\sum_{i = 2}^{K} \left(\mu_{\cmr} + \sigma_{\cmr} \sqrt{2 \ln \left(\lceil M/(2^i) \rceil\right)}\right) \\
  & =    & (K-1)\mu_{\cmr} + \sigma_{\cmr} \sqrt{2 \ln 2} \displaystyle\sum_{i = 2}^{K} \sqrt{\left(\lceil M/(2^i) \rceil\right)} \\
  & \leq & (K-1)\mu_{\cmr} + \sigma_{\cmr} \sqrt{2 \ln 2} \displaystyle\sum_{i = 1}^{K-2} \sqrt{i} \\
  & \leq & (K-1)\mu_{\cmr} + \frac{2}{3} \sigma_{\cmr} \sqrt{2 \ln 2} \left((K-1)^{3/2} -1\right) \\
\end{array}
\]

\paragraph{Estimating $\cost_4$}
This cost is immediate:
\[ \cost_4 = \mu_{\fr} \]

Our computed upper-bound on the query latency is then $\cost_1 +
\cost_2 + \cost_3 + \cost_4$.

\section{Proofs of Security}
\label{app:proofs}

In this section, we prove the security of \sys's MtS and modified
DualEx protocol (which we call \emph{asymmetric DualEx} here).

In \S\ref{app:preliminaries}, we introduce notation and the preliminaries necessary for our proofs of security: message authentication codes and their properties (\S\ref{app:mac}) and secure secret sharing (\S\ref{app:sss}).
In \S\ref{app:sss-construction}, we combine MACs and secret sharing to give a construction of a secure secret sharing scheme and a proof of its security. In \S\ref{app:asym-dualex}, we modify DualEx, a symmetric 2PC protocol with one-bit leakage, to enable asymmetric outputs, then prove that this modified protocol (asymmetric DualEx) is secure.

\subsection{Notation and preliminaries}
\label{app:preliminaries}

For any deterministic functionality $\H$, let $\H^-$ be the functionality that computes $\H$ with one-bit leakage. That is, for a malicious party $P_i$, the two-party functionality $\H^-$ returns the same output as $\H$ along with some one-bit leakage function $\ell$ of $P_i$'s choice applied to the other party's input~$x$ ($\lvert \ell(x) \rvert = 1$). For a randomized functionality, $\ell$ can depend on the other party's input and the randomness used by the functionality.

Throughout this appendix, we assume that some algorithms can output a distinguished symbol $\perp$. Our security proofs consider \ppt~(probabilistic polynomial-time) adversaries $\A$. As is standard in cryptography, the security properties of the schemes we introduce are proven by showing that the \emph{advantage} (defined separately for every property) of a \ppt~adversary $\A$ is upper-bounded by a \emph{negligible function}:

\begin{definition}[negligible function]
	A function $f$ is negligible if for all $c \in \mathbb{N}$ there exists an $N \in \mathbb{N}$ such that $f(n) < n^{-c}$ for all $n > N$.
\end{definition}

\subsection{Message authentication codes}\label{app:mac}

We use MACs to provide integrity for the data sent by a device to the system. When a device generates data, it secret-shares it before sending one share to each of the two TEEs. Because standard, additive secret sharing schemes are malleable, we utilize message authentication codes (MACs) to add integrity.

\newcommand{\vrfy}{\ensuremath{\mathsf{Verify}}}
\newcommand{\tg}{\ensuremath{\mathsf{Mac}}}
\newcommand{\kgen}{\ensuremath{\mathsf{KeyGen}}}

\begin{definition}[message authentication code (MAC)]
	A message authentication code (MAC) is a triple of polynomial-time algorithms $M = (\kgen, \tg, \vrfy)$ such that
	\begin{itemize}
		\item $\kgen$ takes as input a security parameter $1^\kappa$ and outputs a random key $k \sample \mathcal{D}_\kappa$
		\item $\tg$ takes as input a key $k$ in some domain $\mathcal{D}_\kappa$ associated with a security parameter $\secparam$ and a message $m$ in some domain $\mathcal{D}_m$ and outputs a tag $t$.
		\item $\vrfy$ takes as input a key $k \in \mathcal{D}_\kappa$, a message $m \in \mathcal{D}_m$, and a tag $t$ and outputs one of two distinguished symbols $\top, \perp$.
	\end{itemize}
	
	For correctness, we require that for all $m \in \mathcal{D}_m$ and $k \in \mathcal{D}_\kappa$, $\vrfy(k, m, \tg(k, m)) = \top$.
\end{definition}

\paragraph{Notation.} Define $\tg_k(m) := \tg(k, m)$ and $\vrfy_k(m, t) := \vrfy(k, m, t)$.

\hfill\\
The security guarantees of MACs span a wide range. We define the relevant notions below.

\paragraph{One-time strong unforgeability.}
Informally, one-time strong unforgeability says that it is infeasible to forge a tag on a message without knowing the key (including a new tag on a message for which the attacker already knows a tag).

Let $M=(\kgen, \tg, \vrfy)$ be a MAC. For a given $\ppt$ adversary $\A$, we define $\A$'s advantage with respect to $M$ as $\textsf{MAC1-sforge-adv}[\A, M] :=$
\[
\Pr\left[ \begin{array}{c}
	m \leftarrow \A(\secparam); \\
	k \leftarrow \kgen(\secparam);\\
	t \leftarrow \tg_k(m); \\
	(m', t') \leftarrow \A(t)
\end{array} : \begin{array}{c}
	(m', t') \neq (m, t) \; \bigwedge \\
	\vrfy_k(m', t') = \top
\end{array} \right].
\]

\begin{definition}[one-time strong unforgeability]
	We say a MAC $M=(\kgen, \tg, \vrfy)$ is \emph{one-time strongly unforgeable} (alternatively, \emph{one-time strongly secure}) if, for all \ppt~adversaries $\A$, there exists a negligible function $negl$ such that $\textsf{MAC1-sforge-adv}[\A,\allowbreak M] \leq negl(\kappa)$.
\end{definition}

\paragraph{One-time key authenticity.}
Standard notions of MAC security deal only with the consequences of an attacker viewing a message-tag pair. In our setting, we introduce a new notion of security for MACs which considers the case in which $\A$ has access to a message-key pair. Informally, a MAC is key authentic if it is difficult to find a new key which still authenticates a given message-tag pair.

Let $M=(\kgen, \tg, \vrfy)$ be a MAC. For a given \ppt~adversary $\A$, we define $\A$'s advantage with respect to $M$ as $\textsf{MAC1-kauth-adv}[\A, M] :=$
\[
\Pr\left[ \begin{array}{c}
	m \leftarrow \A(\secparam);\\
	k \leftarrow \kgen(\secparam);\\
	t \leftarrow \tg_k(m);\\
	k' \leftarrow \A(k)
\end{array} : \begin{array}{c}
	k' \neq k \; \bigwedge \\
	\vrfy_{k'}(m, t) = \top
\end{array} \right].
\]

\begin{definition}[one-time key authenticity]
	We say a MAC $M=(\kgen, \tg, \vrfy)$ is \emph{one-time key authentic} if, for all \ppt~adversaries \A, there exists a nonnegligible function $negl$ such that \textsf{MAC1-kauth-adv}$[\A, M] \leq negl(\kappa)$.
\end{definition}

\paragraph{One-time non-malleability.}
In this case, we require that an adversary cannot cause a fixed, known tag to verify a message even if it can modify the message and key by some additive shift.

Let $M=(\kgen, \tg, \vrfy)$ be a MAC. For a given \ppt~adversary $\A$, we define $\A$'s advantage with respect to $M$ as $\textsf{MAC1-nmall-adv}[\A, M] :=$
\[
\Pr\left[ \begin{array}{c}
	m \leftarrow \A(\secparam);\\
	k \leftarrow \kgen(\secparam);\\
	t \leftarrow \tg_k(m);\\
	(\Delta_m, \Delta_k) \leftarrow \A(t)
\end{array} : \begin{array}{c}
	(\Delta_m, \Delta_k) \neq (0, 0) \; \bigwedge \\
	\vrfy_{k+\Delta_k}(m+\Delta_m, t) = \top
\end{array} \right].
\]

\begin{definition}[one-time non-malleability]
	We say a MAC $M=(\kgen, \tg, \vrfy)$ is \emph{one-time non-malleable} if, for all \ppt~adversaries \A, there exists a nonnegligible function $negl$ such that \textsf{MAC1-nmall-adv}$[\A, M] \leq 
	negl(\kappa)$.
\end{definition}

\paragraph{Privacy.}
Let $M=(\kgen, \tg, \vrfy)$ be a MAC. For a given $\ppt$ adversary $\A$, we define $\A$'s advantage with respect to $M$ as $\textsf{MAC-priv-adv}[\A, M] :=$
\[
\left|\Pr\left[ 
\begin{array}{c} 
	(m_0, m_1) \leftarrow \A(\secparam);\\
	k \leftarrow \kgen(\secparam);\\
	b \sample \bin;\\
	t_b \leftarrow \tg_k(m_b)
\end{array} : \A(t_b) = b
\right] - \frac{1}{2} \right|. 
\]

\begin{definition}[privacy]
	We say a MAC $M = (\kgen,\allowbreak \tg, \vrfy)$ is \emph{private} if, for all \ppt~adversaries $\A$, there is a negligible function $negl$ such that $\textsf{MAC-priv-adv}[\A, M] \leq negl(\kappa)$.
\end{definition}

\subsubsection{Secure secret-sharing}\label{app:sss}

As mentioned previously, encryption secret-shares data between two TEEs. Basic secret-sharing schemes only deal with privacy of the shared data and do not consider accuracy of reconstruction. In this section, we define (in addition to the notion of privacy) a notion of authenticity which captures the property that the holder of a share of the data cannot, without being detected, change its share in a way that causes the reconstructed data to be altered. A secret sharing scheme with this additional property is called ``secure'' and will ultimately be achieved by leveraging MACs (\S\ref{app:sss-construction}).

\begin{definition}[secret-sharing scheme]
A pair of polynomial-time algorithms $\Sigma = (\share, \recon)$ is a (two-party) secret-sharing scheme if
\begin{itemize}
    \item $\share$ takes as input a security parameter $\secparam$ and a value $x$ in the domain $\mathcal{D}_\kappa$ associated with~$\kappa$ (e.g., $\mathcal{D}_\kappa = \bin^\kappa$) and outputs two shares $\sh_1, \sh_2$. We assume $\kappa$ is implicit in each share.
    \item $\recon$ takes as input two shares and outputs either a value $y \in \mathcal{D}_\kappa$ or $\perp$.
\end{itemize}

For correctness, we require that for all $\kappa$ and $x \in \mathcal{D}_\kappa$, $\recon(\share(\secparam, x)) = x$.
\end{definition}

\paragraph{Privacy.}
Informally, privacy says that, given a share of one of two values $x_0, x_1$, an adversary $\A$ cannot distinguish whether $x_0$ or $x_1$ was shared.

Let $\Sigma = (\share, \recon)$ be a secret-sharing scheme. For a given \ppt~adversary 
$\A$, we define $\A$'s \emph{advantage} with respect to $\Sigma$ as $\textsf{SS-priv-adv}[\A, \Sigma] :=$
\[
    \left|\Pr\left[ \begin{array}{c} 
        (x_0, x_1, i) \leftarrow \A(\secparam); \\
        b \sample \bin; \\
        (\sh_{b,1}, \sh_{b,2}) \leftarrow \share(\secparam, x_b) 
    \end{array} 
    : \A(\sh_{b,i}) = b\right] - \frac{1}{2} \right|. 
\]

\begin{definition}[privacy]
    We say a secret-sharing scheme $\Sigma = (\share, \recon)$ is \emph{private} if, for all \ppt~adversaries $\A$, there is a negligible function $negl$ such that $\textsf{SS-priv-adv}[\A, \Sigma] \leq negl(\kappa)$.
\end{definition}

\paragraph{Authenticity.}
Informally, authenticity guarantees that any modification to a share will result in $\recon$ returning $\perp$ with high probability. 

Let $\Sigma = (\share, \recon)$ be a secret-sharing scheme. For a given \ppt~adversary $\A$, we define $\A$'s advantage with respect to $\Sigma$ as $\textsf{SS-auth-adv}[\A, \Sigma] :=$
\[
    \Pr\left[ \begin{array}{c}
        (x, i) \leftarrow \A(\secparam); \\
        (\sh_1, \sh_2) \leftarrow \share(\secparam, x);\\
        \sh_i' \leftarrow \A(\sh_i)
    \end{array} : \begin{array}{c}
        \sh_i' \neq \sh_i \; \bigwedge \\
        \recon(\sh_i', \sh_{3-i}) \neq \perp
    \end{array} \right].
\]

\begin{definition}[authenticity]
    We say a secret-sharing scheme $\Sigma = (\share, \recon)$ is \emph{authenticated} if, for all \ppt adversaries $\A$, there is a negligible function $negl$ such that $\textsf{SS-auth-adv}[\A, \Sigma] \leq negl(\kappa)$.
\end{definition}

\begin{definition}[security]
    We say a secret-sharing scheme $\Sigma = (\share, \recon)$ is \emph{secure} if it is both private and authenticated.
\end{definition}

\subsection{Secure secret sharing construction}\label{app:sss-construction}

There are several ways to construct a secure secret-sharing scheme $\Sigma_M = (\share_M, \recon_M)$ using a MAC $M = (\kgen,\allowbreak \tg, \vrfy)$ and base (non-authenticated) secret-sharing scheme $\Sigma = (\share, \recon)$. We introduce MAC-then-Share, the construction we use, and analyze its security.

\begin{definition}[MAC-then-share (MtS)]
    Let $M = (\kgen,\allowbreak \tg, \vrfy)$ be a MAC and $\Sigma = (\share, \recon)$ a secret-sharing scheme. Define $\Sigma_M = (\share_M, \recon_M)$, the MtS secret-sharing scheme based on $M$ and $\Sigma$, as follows:
    \begin{itemize}
        \item $\share_M$ takes as input a security parameter $\secparam$ and a value $x$ in the domain $\mathcal{D}_\kappa$ associated with $\kappa$. It generates a single key $k$ using $\kgen$ and uses it to computes one tag $t$ on the secret value $x$ as $t := \tg_k(x)$. Now it shares $x, k$ by computing $(x_1, x_2) \leftarrow \share(x)$ and $(k_1, k_2) \leftarrow \share(k)$, and outputs two shares $\sh_1$, $\sh_2$ with $\sh_i := (x_i, k_i, t)$.
        \item $\recon_M$ takes as input two shares. If the tags match, it reconstructs $y := \recon(x_1, x_2)$ and $k' := \recon(k_1, k_2)$. Then, if $\vrfy_{k'}(y, t) = \top$, it returns $y$; otherwise it returns $\perp$.
    \end{itemize}
\end{definition}

What properties are required of the MAC and secret-sharing scheme in order for their composition to be a secure secret-sharing scheme? In Theorem~\ref{thm:mts}, we show that the MAC must meet all four properties presented in Section~\ref{app:mac} in order for the MtS construction to be secure.

\begin{thm}\label{thm:mts}
    If $M$ is a one-time strongly unforgeable, key authentic, non-malleable, and private MAC and $\Sigma$ an additive secret-sharing scheme, then the MtS secret-sharing scheme $\Sigma_M$ constructed from $M$ and $\Sigma$ is secure.
\end{thm}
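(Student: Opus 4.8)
The plan is to prove the two halves of \emph{security} separately and independently: \emph{privacy} of $\Sigma_M$ reduces to privacy of $M$, and \emph{authenticity} of $\Sigma_M$ reduces to the other three MAC notions via a case analysis. The single structural fact that drives both reductions is that $\Sigma$ is \emph{additive}: in a two‑out‑of‑two additive sharing, each individual share is uniformly distributed and independent of the shared value (and of any other value sampled independently of its randomness), and reconstruction against a fixed complementary share is affine in the modified share. I would isolate this as a one‑line observation and reuse it in both parts.

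For privacy, I would build a reduction $\B$ with $\textsf{SS-priv-adv}[\A,\Sigma_M]\le\textsf{MAC-priv-adv}[\B,M]$: given $\A$'s challenge $(x_0,x_1,i)$, forward $(x_0,x_1)$ to the MAC‑privacy challenger, receive $t_b=\tg_k(x_b)$ for an unknown fresh $k$, sample $x_i,k_i$ uniformly at random by itself, and hand $\A$ the share $\sh_i=(x_i,k_i,t_b)$. Since in a genuine $\share_M(\secparam,x_b)$ the $i$th components $x_i$ and $k_i$ are exactly uniform and independent of $x_b$, of $k$, and hence of $t_b$, this simulation of $\A$'s view is perfect, so $\A$'s guess is a guess in the MAC‑privacy game; privacy of $\Sigma_M$ follows. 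For authenticity, fix $\A$ that outputs $(x,i)$, receives $\sh_i=(x_i,k_i,t)$ with $t=\tg_k(x)$, and returns $\sh_i'=(x_i',k_i',t')\neq\sh_i$; write $x_i'=x_i\oplus\Delta_x$ and $k_i'=k_i\oplus\Delta_k$. By additivity, reconstructing against the untouched complementary share gives $y=x\oplus\Delta_x$ and $k'=k\oplus\Delta_k$, and $\recon_M$ avoids $\perp$ only if the tag check succeeds. I would then split on the modification pattern: (i) only the tag changes ($\Delta_x=\Delta_k=0$, $t'\neq t$), giving $(x,t')\neq(x,t)$ with $\vrfy_k(x,t')=\top$, a one‑time strong‑unforgeability break; (ii) only the key share shifts ($\Delta_x=0$, $\Delta_k\neq0$, tag unchanged), giving $k'\neq k$ with $\vrfy_{k'}(x,t)=\top$, a key‑authenticity break; (iii) the remaining cases, where $(\Delta_x,\Delta_k)\neq(0,0)$ and the verified tag is the original $t$, giving $\vrfy_{k\oplus\Delta_k}(x\oplus\Delta_x,t)=\top$, a non‑malleability break. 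In each case the reduction simulates the honest‑looking share $(x_i,k_i,t)$ exactly as in the privacy argument (using uniformity of additive shares), so the simulation is perfect; a union bound over the cases bounds $\textsf{SS-auth-adv}[\A,\Sigma_M]$ by a negligible function, and privacy $+$ authenticity is security.

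The main obstacle is the bookkeeping around the tag component of the tampered share: one must pin down precisely which tag $\recon_M$ verifies against when the two input shares carry different tag fields — the matched/untouched one or a value under adversarial control — because this determines whether a modification that \emph{simultaneously} shifts the key and rewrites the tag is killed by $\recon_M$'s tag‑matching step or must instead be absorbed by a MAC notion. It is exactly this cross‑case (rather than the "pure" single‑component modifications, each of which is covered by one notion) that makes the theorem assume all four MAC properties, so the delicate part of the write‑up is arguing that every surviving combined tampering collapses into one of the three games above. A secondary, routine care point is checking that each reduction's simulated honest share is drawn from the correct marginal of $\share_M$; this is where additivity of $\Sigma$ is invoked, and I would state it as a small shared lemma rather than reprove it in each case.
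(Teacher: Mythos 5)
Your proposal matches the paper's proof in all essentials: privacy reduces to MAC privacy using the uniformity of additive shares, and authenticity is proved by a case analysis on which components of the tampered share changed, reducing to one-time strong unforgeability, key authenticity, and non-malleability via reductions that simulate the honest share with uniformly random components. The only (immaterial) difference is the assignment of cases: since $\recon_M$ rejects mismatched tags, every successful tampering has $t'=t$, so your tag-only case (i) is vacuous and your ``combined tag-and-key'' bookkeeping worry evaporates, while the message-only modification, which you absorb into non-malleability, is the case the paper instead sends to strong unforgeability (both assignments are valid).
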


\begin{proof}
    The privacy of $\Sigma_M$ follows directly from privacy of $M$ (since $\tg_k(x)$ reveals nothing about $x$) and $\Sigma$ ($x_i$ also reveals nothing about $x$).
    
    Authenticity of $\Sigma_M$ is a bit more unwieldy. We prove the contrapositive that if $\Sigma_M$ is not authenticated, then either (1) $M$ is not strongly secure, (2) $M$ lacks key authenticity, or (3) $M$ is malleable. As before, we do this by reduction of an adversary $\A$ with $\textsf{SS-auth-adv}[\A, \Sigma_M]$ nonnegligible to three adversaries for each of the three corresponding games, and show that at least one of them has nonnegligible advantage in its game.
    
    First, we construct an adversary $\A_\textsf{sforge}$ for the game $\textsf{MAC1-sforge}_{\A_\textsf{sforge}, M}$. $\A_\textsf{sforge}$ receives $x,i$ from $\A$ and constructs a MtS triple $\sh_i := (x_i, k_i, t_i)$ as follows: it sends $x$ to its game to get $t := \tg_k(x)$, picks a random $k_i \in \D_\kappa$, and runs $\share$ on $x, t$ to get $x_1, x_2, t_1, t_2$. Now $\A_\textsf{sforge}$ runs $\A$ on $\sh_i$ to get $\sh_i' := (x_i', k_i', t_i')$ and returns $(x', t')$, where $x' \leftarrow \recon(x_i', x_{3-i})$ and $t' \leftarrow \recon(t_i', t_{3-i})$.
    
    Next, we construct an adversary $\A_\textsf{kauth}$ for the game \textsf{MAC-kauth}$_{\A_\textsf{kauth}, M}$. $\A_\textsf{kauth}$ is given $x,i$ by $\A$ and sends $x$ to its game, which sends back a key $k$. It uses this key to construct a MtS triple $\sh_i := (x_i, k_i, t)$, computing $t := \tg_{k_i}(x)$ and running $\share$ on $x, k$ to get $x_1, x_2, k_1, k_2$. Now $\A_\textsf{kauth}$ runs $\A$ on $\sh_i$ to get $\sh_i' := (x_i', k_i', t')$ and returns $k' \leftarrow \recon(k_i', k_{3-i})$.
    
    Third, we construct an adversary $\A_{\textsf{nmall}_M}$ for the MAC non-malleability game.$\A_{\textsf{nmall}_M}$ is given $x,i$ by $\A$ and sends $x$ to its game to receive $t$. It constructs anMtS triple $\sh_i := (x_i, k_i, t_i)$ by choosing a random $k \in \D_\kappa$ and running $\share$ on $x, k, t$. Now $\A_{\textsf{nmall}_M}$ runs $\A$ on $\sh_i$ to get $\sh_i' := (x_i', k_i', t')$. It computes $x' \leftarrow \recon(x_i', x_{3-i}), k' \leftarrow \recon(k_i', k_{3-i})$, and returns $(x', k')$.
    
    We now analyze the winning probability of each of the three adversaries. By our assumption, $\A$ wins its game with nonnegligible probability, implying $t' = t$. If $\A$ returns $\sh_i'$ such that $k_i' = k_i$ with nonnegligible probability, then $\A_\textsf{sforge}$ has nonnegligible advantage and we are done. If not, then with nonnegligible probability, $\A$ returns $\sh_i'$ with $k_i' \neq k_i$. If $x_i' = x_i$ a nonnegligible fraction of the time, $\A_\textsf{kauth}$ has nonnegligible advantage in its game, and we are done. Otherwise, $x_i' \neq x_i$ and $\A_{\textsf{nmall}_M}$ has nonnegligible advantage in the non-malleability game for $M$: it has a pair $(\Delta_m := x_i' - x_i, \Delta_k := k_i' - k_i)$ such that $\vrfy_{k_i' + k_{3-i}}(x_i' + x_{3-i}, t) = \vrfy_{(k_i + k_{3-i}) + \Delta_k}((x_i + x_{3-i}) + \Delta_m, t) = \top$.
    
    Thus, if $M$ is authenticated, it is not strongly secure, lacks key authenticity, or is malleable, all of which contradict our assumptions.
\end{proof}

Note that the proof also holds for the XOR secret sharing scheme by substituting $+,-$ for $\oplus$.

\paragraph{Concrete Construction.} \sys~uses KMAC256\cite{kmac-nist}, a NIST-standardized SHA3-based MAC. It can be abstracted as follows, where $||$ indicates concatenation. (We omit some additional parameters which are constant and public in \sys; see Appendix A in \cite{kmac-nist} and Section 6.1 in \cite{sha3-nist}.) 

\def\sha3{\textsf{SHA3}}
\begin{description}
    \item[] $\kgen$: Use \sha3's key generation algorithm.
    \item[] $\tg_k(m)$: Compute $h := \sha3(k)$ and announce it publicly. Output $t := \sha3(k || m)$.
    \item[] $\vrfy_k(m,t)$: Output $\top$ iff $\sha3(k) = h ~\wedge~ \sha3(k || m) = t$.
\end{description}
This MAC meets the conditions of Theorem~\ref{thm:mts} in the random oracle model (ROM): 
\begin{itemize}
    \item \textbf{one-time strong unforgeability:} Due to randomness of the output $\sha3(k || m')$.
    \item \textbf{one-time key authenticity:} Due to collision-resistance, which guarantees that for $k \neq k'$ we have with overwhelming probability that $\sha3(k || m) \neq \sha3(k' || m)$.
    \item \textbf{one-time non-malleability:} Again due to collision-resistance, since for $(k,m) \neq (k',m')$ we have with overwhelming probability that $\sha3(k || m) \neq \sha3(k' || m')$.
    \item \textbf{privacy:} Due to randomness of the output of SHA3.
\end{itemize}

Hence, \sys's MtS construction of $\Sigma_M$ with $\Sigma$ as the additive secret sharing scheme and $M$ as KMAC256 is secure.

\subsection{DualEx functionality with asymmetric outputs}\label{app:asym-dualex}

Before turning to the use of our secure secret sharing construction within a secure multi-party (two-party) computation protocol, we must discuss the exact protocol we use and its non-standard security guarantees.

We use the DualEx protocol\cite{dualex} as a building block. DualEx guarantees security against malicious adversaries at almost the cost of of semi-honest protocols, but with the following caveats: it can compute any \emph{symmetric} two-party functionality $\F_\text{sym}$, and it does so with one-bit leakage. By symmetric we mean that both parties are required to receive the same output.

In this section, we will address how to modify DualEx to allow for the computation of \emph{asymmetric} functionalities while maintaining the same security guarantees (malicious with one-bit leakage). This modified DualEx, which we dub ``asymmetric DualEx'', will be used to implement the core decryption functionality, and the one-bit leakage will carry through the remainder of the proofs.

Let $f : \X \times \X \rightarrow \Y$ be some function. In the our setting, we want the output of $f$ to be shared between the parties so that neither learns the output. More specifically, we want to compute the functionality $\F_\text{asym}$ that takes as input $x_1$ from one party and $x_2$ from the other, and returns a uniformly random $r$ to the first party and $f(x_1, x_2) \oplus r$ to the second party. When $P_1$ is honest, $r$ is chosen by $\F_\text{asym}$; when $P_1$ is malicious, $P_1$ chooses $r$.

To do this, we construct a two-party protocol $\Pi$ that computes $\F_\text{asym}$ via access to the symmetric functionality $\F_\text{sym}$ that takes inputs $(x_1, r_1), (x_2, r_2) \in \X \times \Y$ from the parties, computes $y := f(x_1, x_2) \oplus r_1 \oplus r_2$, and returns $y$ to both parties. The protocol proceeds as follows: 
\begin{itemize}
    \item Each party $P_i$ holds an input $x_i \in \X$. It additionally samples a uniform blinding value 
    $r_i \in \Y$. The parties then provide their inputs $(x_1, r_1)$ and $(x_2, r_2)$, respectively, to~$\F_\text{sym}$. 
    \item Both parties receive in return $y := f(x_1, x_2) \oplus r_1 \oplus r_2$.
    \item $P_1$ computes its output as $y_1 := r_1$, while $P_2$ computes its output as $y_2 := y \oplus r_2$.
\end{itemize}

Notice that $y \oplus r_2 = f(x_1, x_2) \oplus r_1$, so the protocol outputs $f(x_1, x_2) \oplus r_1$ to the second party and the parties now hold shares of $f(x_1, x_2)$. Below we show that this modified protocol maintains the same security guarantees as the base DualEx.
    
\begin{thm}\label{thm:asym-dualex}
$\Pi$ securely computes $\F_\text{asym}^-$ against malicious adversaries in the $\F_\text{sym}^-$-hybrid model.
\end{thm}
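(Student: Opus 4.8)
The plan is to give a simulation-based proof in the standard malicious, static-corruption model, using crucially that in the $\F_\text{sym}^-$-hybrid world there is no cryptography left to analyze: $\F_\text{sym}^-$ is an idealized oracle, so $\Pi$ is just a handful of local XORs wrapped around one oracle call. I would dispatch the no-corruption case immediately from correctness (the joint output is $(r_1,\, f(x_1,x_2)\oplus r_1)$ with $r_1$ uniform, which is exactly a sample of $\F_\text{asym}$), and then handle the two single-corruption cases separately. In each, I would exhibit a \ppt\ simulator $\Sim$ for which the joint distribution of the corrupt party's view and the honest party's output is \emph{identical} in the real ($\F_\text{sym}^-$-hybrid) and ideal ($\F_\text{asym}^-$) executions, so the simulation is perfect rather than merely computational. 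The computational content lives entirely in the separate composition step (plugging DualEx's realization of $\F_\text{sym}^-$ from~\cite{dualex} into the hybrid), which is standard and not part of this theorem.

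Case $P_1$ corrupted. $\Sim$ runs $\A$, extracts its input $(x_1,r_1)$ and its chosen leakage function $\ell$ (a one-bit function of $P_2$'s $\F_\text{sym}$-input $(x_2,r_2)$). Since $P_1$ is malicious in $\F_\text{asym}^-$ it chooses the share $r$, so $\Sim$ sets $r := r_1$ and submits $(x_1,r)$ to $\F_\text{asym}^-$. It then samples $y \sample \Y$ uniformly; this is distributed exactly as the real $\F_\text{sym}^-$ response, since there $y = f(x_1,x_2)\oplus r_1 \oplus r_2$ with $r_2$ uniform and independent. The one subtlety is the leakage bit: in the real world it is $\ell(x_2,r_2)$, whereas in the simulation $r_2$ is only implicitly pinned down by $r_2 = y\oplus f(x_1,x_2)\oplus r_1$. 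So $\Sim$ defines the ideal leakage function $\ell^*(x_2) := \ell\bigl(x_2,\; y\oplus f(x_1,x_2)\oplus r_1\bigr)$, a valid one-bit function of $x_2$ with $x_1,r_1,y$ hardcoded, submits $\ell^*$ to $\F_\text{asym}^-$, receives $\ell^*(x_2)$, and hands $\A$ the pair $(y,\ell^*(x_2))$ as the simulated $\F_\text{sym}^-$ output; finally $\Sim$ outputs whatever $\A$ outputs. One checks that $P_2$'s honest output is $f(x_1,x_2)\oplus r_1$ in both worlds and that the $\A$-view is identically distributed, so the executions coincide.

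Case $P_2$ corrupted (symmetric). Now $P_1$ is honest, so in $\F_\text{asym}^-$ the share $r$ is sampled by the functionality and plays the role of $P_1$'s uniform $r_1$ in $\Pi$. $\Sim$ runs $\A$, extracts $(x_2,r_2)$ and its leakage function $\ell$, now a one-bit function of $P_1$'s $\F_\text{sym}$-input $(x_1,r_1)=(x_1,r)$; since the convention of~\S\ref{app:preliminaries} permits $\ell$ for a randomized functionality to depend on the functionality's own randomness, $\Sim$ forwards $\ell$ to $\F_\text{asym}^-$ unchanged. $\Sim$ submits $x_2$ to $\F_\text{asym}^-$, obtains $P_2$'s share $z = f(x_1,x_2)\oplus r$ together with the bit $\ell(x_1,r)$, sets $y := z\oplus r_2$ (which equals $f(x_1,x_2)\oplus r\oplus r_2$, exactly the real $\F_\text{sym}^-$ output to $P_2$), and returns $(y,\ell(x_1,r))$ to $\A$. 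Again $P_1$'s honest output is a uniform string ($r$ versus $r_1$) in both worlds and the $\A$-view matches exactly.

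The main obstacle — and the only genuinely non-mechanical point — is the bookkeeping around the one-bit leakage: one must show that the leakage the adversary is entitled to against $\F_\text{sym}^-$, which may depend on the counterparty's \emph{entire} $\F_\text{sym}$-input (including the fresh blinding value), can always be re-expressed as a leakage function of the form permitted against $\F_\text{asym}^-$ (a function of the counterparty's semantic input, possibly together with $\F_\text{asym}^-$'s internal randomness). The two corruption cases achieve this differently — by hardcoding the simulated oracle answer $y$ in the $P_1$-corrupt case, and by invoking the ``$\ell$ may depend on the functionality's randomness'' clause in the $P_2$-corrupt case — and making those reductions line up precisely with the definition in~\S\ref{app:preliminaries} is where the care is needed. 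Everything else reduces to the one-line observation that XORing with a fresh uniform value yields a uniform value.
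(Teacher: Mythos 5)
Your proposal is correct and follows essentially the same route as the paper's proof: the same two simulators (hardcoding the sampled oracle answer $y$ into the leakage function $\ell^*$ for a corrupt $P_1$, and forwarding $\ell$ while locally computing $y := z \oplus r_2$ for a corrupt $P_2$), with the same perfect-indistinguishability argument based on the uniformity of the fresh blinding value. Your explicit remark that the corrupt-$P_2$ case needs the clause allowing $\ell$ to depend on the functionality's internal randomness is a point the paper's write-up leaves implicit, but the underlying argument is identical.
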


\begin{proof}
Let $\A$ be a $\ppt$ adversary corrupting party $P_i$. To prove the security of $\Pi$ against malicious adversaries in the $\F_\text{sym}^-$-hybrid world, we give an adversary $\Sim_i$ in the ideal world that simulates an execution of $\Pi$ in the hybrid world. We first consider the case of a corrupted~$P_1$.

\hfill\\
\noindent
\textbf{Simulator $\Sim_1$:} $\Sim_1$ has access to the ideal functionality $\F_\text{asym}^-$ computing $f(x_1, x_2) \oplus r$ with one-bit leakage. Given f, $\Sim_1$ works as follows:
\begin{itemize}
    \item Receive inputs $x_1', r_1'$, and a leakage function $\ell: \X \times \Y \rightarrow \bin$ from $P_1$.
    \item Sample $y^* \sample \Y$. Convert $\ell$ into a function $\ell^*: \X \rightarrow \{0,1\}$ by letting $\ell^*(x) = \ell(x, y^* \oplus r_1' \oplus f(x_1', x))$ for all $x \in \X$. 
    \item Send $(x_1', r_1')$ and $\ell^*$ to the ideal functionality~$\F_\text{asym}^-$; the honest party sends $x_2$ to $\F_\text{asym}^-$.
    \item Receive in return from $\F_\text{asym}^-$ $r_1'$ and some one-bit leakage $b^* := \ell^*(x_2)$. The honest party receives $y_2^* := f(x_1', x_2) \allowbreak \oplus r_1'$.
    \item Send $y^*, b^*$ to $P_1$.
\end{itemize}

\hfill\\
We prove indistinguishability of the following distribution ensembles.

\paragraph{Ideal experiment.} This is defined by the interaction of $\Sim_1$ with the ideal functionality $\F_\text{asym}^-$. $\A$ outputs an arbitrary function of its view; the honest party outputs what it received from the experiment, namely $y_2^*$. Let $\ideal_{\Sim_1}(\secparam, x_1, x_2, \ell, f)$ be the joint random variable containing the output of the adversary $\Sim_1$ and the output of the honest party. 
Concretely, 
\[
    \ideal_{\Sim_1}(\secparam, x_1, x_2, \ell, f) = ((y^*, b^*), y_2^*).
\]

\paragraph{Hybrid experiment.} let $\textsc{H}_\A(\secparam, x_1, x_2, \ell, f)$ be the joint random variable containing the view of the adversary $\A$ and the output of the honest party in the $\F_\text{sym}^-$-hybrid world. Concretely, 
\[
    \textsc{H}_\A(\secparam, x_1, x_2, \ell, f) = ((y, b), y_2).
\]

\hfill\\
$\Sim_1$ perfectly simulates the view of a malicious $P_1$ in the hybrid world. Because $r_2$ is chosen uniformly at random, both $y^*$ and $y$ are distributed uniformly at random and are thus perfectly indistinguishable. By the definition of $\ell^*$, $b^* = \ell(x_2, y^* \oplus f(x_1', x_2) \oplus r_1'$.
Furthermore, by the definition of $y$, $r_2 = y \oplus f(x_1', x_2) \oplus r_1'$, so $b = \ell(x_2, r_2)$ is perfectly indistinguishable from $b^*$. Hence the joint distributions of $(y^*, b^*)$ and $(y, b)$ are perfectly indistinguishable. Finally, $y_2^*$ and $y_2$ are identical by the definition of $y_2$ and thus perfectly indistinguishable.

Next, we give a simulator $\Sim_2$ for the case of a corrupted $P_2$.

\hfill\\
\noindent
\textbf{Simulator $\Sim_2$:} $\Sim_2$ has access to the ideal functionality $\F_\text{asym}^-$ computing $f(x_1, x_2) \oplus r$ with one-bit leakage. Given $f$, $\Sim_2$ works as follows:
\begin{itemize}
    \item Receive inputs $x_2', r_2'$, and a leakage function $\ell$  from $P_2$.
    \item Forward $x_2'$ and $\ell$ to the ideal functionality~$\F_\text{asym}^-$; the honest party sends $x_1$ to $\F_\text{asym}^-$.
    \item Receive in return from $\F_\text{asym}^-$ the output $z := f(x_1, x_2') \oplus r_1$ for uniformly random $r_1$ and some one-bit leakage $b := \ell(x_1)$. The honest party receives $r_1$.
    \item Compute $y^* := z \oplus r_2'$. Send $y^*, b$ to $P_2$.
\end{itemize}

\hfill\\
We prove indistinguishability of the following distribution ensembles.

\paragraph{Ideal experiment.} This is defined by the interaction of $\Sim_2$ with the ideal functionality $\F_\text{asym}^-$. $\A$ outputs an arbitrary function of its view; the honest party outputs its honestly computed value of $y_1$, namely $r_1$. Let $\ideal_{\Sim_2}(\secparam, x_1, x_2, \ell, f)$ be the joint random variable containing the output of the honest party and the output of the adversary $\Sim_2$. 
Concretely, 
\[
    \ideal_{\Sim_2}(\kappa, x_1, x_2, \ell, f) = (r_1, (y^*, b)).
\]

\paragraph{Hybrid experiment.} let $\textsc{H}_\A(\secparam, x_1, x_2, \ell, f)$ be the joint random variable containing the the output of the honest party and the view of the adversary $\A$ in the $\F_\text{sym}^-$-hybrid world. Concretely, 
\[
    \textsc{H}_\A(\secparam, x_1, x_2, \ell, f) = (r_1, (y, b)).
\]

\hfill\\
$\Sim_2$ perfectly simulates the view of a malicious $P_2$ in the hybrid world, since $y^* = z \oplus r_2' = f(x_1, x_2') \oplus r_1 \oplus r_2' = y$.

\hfill\\
Therefore, $\Pi$ is secure (up to 1 bit of leakage) against malicious adversaries in the $\F_\text{sym}^-$-hybrid model.
\end{proof}

Notice that the DualEx protocol is an instantiation of $\F_\text{sym}^-$, so our modified protocol is a real-world protocol with the same security guarantees against malicious parties as the original DualEx: privacy up to one bit of leakage and full correctness of the output.

\else
\fi

\end{document}
